\documentclass[abstract,DIV=15]{scrartcl}

\usepackage{packages-arxiv}
\usepackage{commands-arxiv}

\bibliography{bibliography}

\title{New Contraction Bounds for Multidimensional Asymptotic Consensus in Dynamic Networks}

\setkomafont{author}{\normalsize}

\makeatletter
\renewcommand*\@fnsymbol[1]{\the#1}
\makeatother

\author{%
  Zeremonienmeister Berber Aab\thanks{\href{mailto:lorke@tu-berlin.de}{\texttt{lorke@tu-berlin.de}}}, %
  Darya Melnyk\thanks{\href{mailto:melnyk@tu-berlin.de}{\texttt{melnyk@tu-berlin.de}}}, and %
  Stefan Schmid\thanks{\href{mailto:stefan.schmid@tu-berlin.de}{\texttt{stefan.schmid@tu-berlin.de}}}\\[0.35em]
  {\small Technische Universit\"at Berlin, Germany}%
}

\date{}

\hypersetup{
  pdftitle={New Contraction Bounds for Multidimensional Asymptotic Consensus in Dynamic Networks},
  pdfauthor={Zeremonienmeister Berber Aab, Darya Melnyk, Stefan Schmid}
}

\begin{document}
\maketitle
\begin{abstract}
  %
%
This work studies the multidimensional asymptotic agreement problem in dynamic networks. We assume that nodes have $d$-dimensional inputs and need to converge arbitrarily close to each other's inputs while their vectors stay inside the convex hull of all inputs. 

We present a new upper bound of $\sqrt{2d/(3d+1)}$ on the contraction rate of asymptotic agreement in the non-split network model, where $d$ is the dimension of the input and $\sqrt{2d/(3d+1)}\rightarrow \sqrt{2/3}\approx 0.816$ for $d\rightarrow\infty$. To this end, we adapt the BallMidpoint algorithm (Melnyk, 2026) recently introduced for the all-to-all communication model. This algorithm lets the nodes choose the midpoint of the smallest enclosing ball of the received vectors. This bound is strictly worse than the $\sqrt{1/2}\approx 0.707$ contraction rate for fault-tolerant approximate agreement in all-to-all communication networks because the convex hulls of the nodes do not have a common intersection, and only intersect pairwise. Our bound improves over the previously best-known contraction rate of $\sqrt{7/8}\approx 0.935$ for dynamic networks via the MidExtremes algorithm (Függer and Nowak, 2018).

We present the first multi-dimensional lower bound of $2/3$ for the contraction of coordinate-free memoryless anonymous deterministic algorithms in non-split networks, for $d\ge 8$. This result shows that the previously best-known lower bound of $1/2$ on the contraction rate for $n\ge 3$ and $d=1$ is not tight in higher dimensions. If $n$ is infinite, we extend this lower bound to $\sqrt{1/2\cdot d/(d+1)}$, which converges to $\sqrt{1/2}$ for $d\rightarrow \infty$. We further show that for a single round, the one-round contraction is at least $\sqrt{\frac{2d-2}{3d+7}}$, which converges to $\sqrt{2/3}$ for $d\rightarrow\infty$, showing that the BallMidpoint strategy is asymptotically optimal for one round.

\end{abstract}

%
%
\section{Introduction}

In asymptotic agreement, nodes in a distributed system communicate their values (states or locations) to other nodes and update them over time such that all nodes' values converge. This type of agreement problem is of importance in practical systems where exact agreement is not possible, such as in robot gathering~\cite{BOUZID20103154, CFPS03, PATTANAYAK2019145}, or is simply too expensive, such as in collaborative learning~\cite{El-Mhamdi21,CMM25}. In robot gathering, an agreement algorithm is typically run until the nodes' locations are close enough to each other, which corresponds to the approximate agreement problem. In collaborative learning, only a constant number of update steps are executed to make sure that the nodes' models do not diverge too much over time. This problem is called the averaging agreement problem. Both problems can be viewed as a special case of asymptotic agreement, where the nodes have an additional stopping condition.

In many practical applications, the input is multidimensional. Multidimensional asymptotic agreement has been considered in the literature under the additional condition that the value to which the nodes are converging should lie in the convex hull of all input values (convex validity condition). Much of the previous research work has studied how this condition can be satisfied if some of the nodes in the system fail~\cite{MHVG15, VG13,MH13}, under synchronous~\cite{VG13,MHVG15}, asynchronous~\cite{MHVG15, VG13,MH13} and network-agnostic~\cite{GLW23,GMM26} communication models, as well as analogs of convex validity in other domains such as graphs~\cite{NR19,AER23,FGPR26,RV26} and simplices~\cite{RV26,AR26}.
The main focus of this line of work is on impossibility results and the resilience of the algorithms.

In this work, we consider asymptotic agreement in dynamic networks. In dynamic networks~\cite{KLO10,CBS09}, the communication graph connecting the nodes changes over time. This model is motivated by wireless and mobile networks, e.g., drones, where the nodes can move and communicate with other nodes in their vicinity. Through the movement, the communication graph may change, which presents new difficulties for algorithm design compared to static networks. Charron-Bost et al.~\cite{CBFT15} showed that asymptotic agreement is solvable if and only if the communication graph in every round has a rooted spanning tree. We thus restrict the analysis in our work to the rooted tree model, where the communication graph is restricted to be a rooted spanning tree in every round.

The question of contraction of multidimensional asymptotic agreement algorithms was first raised by Függer and Nowak~\cite{FN18}. The contraction is the fraction by which the nodes get closer together in every round of the algorithm. The authors analyze the contractions of their algorithms in the all-to-all static networks and in the dynamic non-split networks. They present the first algorithms with constant dimension-independent contractions for the mentioned models, called the MidExtremes and ApproachExtreme, with contractions $\sqrt{7/8}$ and $\sqrt{31/32}$, respectively. While the first algorithm chooses the midpoint between the furthest two of the received points, the second algorithm chooses the midpoint between the local value of a node and the furthest received point from this value. Recently, the contraction for static all-to-all algorithms has been improved to $\sqrt{1/2}$ in~\cite{M26}. The authors use the BallMidpoint strategy for convergence, where each node locally computes the smallest enclosing ball of a so-called safe area, and chooses the midpoint of this ball as the input for the next round.

In this work, we apply the BallMidpoint strategy to show a new upper bound on the contraction in dynamic networks. Since the dynamic communication model is more general than all-to-all communication, we show that $\sqrt{\frac{2d}{3d+1}}$ is the best one-round contraction that we can achieve with this strategy. Similar to~\cite{FN18}, the presented algorithm is coordinate-free, i.e., it does not rely on a fixed coordinate system; and it is memoryless, in the sense that the nodes only use the values from the same round for decision. Moreover, our algorithm is anonymous, as the BallMidpoint strategy always has a unique result and we thus do not need additional ID-based tie-breaking.  We further present the first multi-dimensional lower bounds on the contractions for dynamic networks. To date, the best-known lower bound for non-split networks was $1/2$, and it was derived for the one-dimensional case.

\paragraph*{Our contribution.}
The main contributions of this paper are summarized as follows:
\begin{itemize}
    \item We present the first coordinate-free memoryless anonymous asymptotic agreement protocol with a contraction rate of at most $\sqrt{\frac{2d}{3d+1}}$ for dynamic non-split networks. The contraction rate improves over the previous work for $d\ge4$ (Centroid~\cite{CBFN16multidim} has a better contraction rate for $d\le3$).  We show that the presented contraction analysis is tight for one round. We also show that this strategy converges in the rooted tree model, but the algorithm is not memoryless. 
    \item We present a lower bound of $2/3$ on the contraction for $d\ge 8$ for coordinate-free memoryless anonymous algorithms, which holds for $n \ge 241$ nodes.
    \item We present a lower bound of $\sqrt{\frac{d}{2(d+1)}}$ for coordinate-free memoryless anonymous algorithms, with a countably infinite number of nodes. We further extend this lower bound to show that, if the nodes do not know $n$, and given a number of rounds $R$, there is a construction where any coordinate-free memoryless anonymous algorithm has a contraction of at least $\sqrt{\frac{d}{2(d+1)}}$ in every round.
    \item We show that for one single round of communication, no coordinate-free memoryless anonymous algorithm can have a contraction better than $\sqrt{\frac{2d-2}{3d+7}} \xrightarrow{d \to \infty} \sqrt{\frac 2 3}$, showing that the analysis of our algorithm is asymptotically optimal for one round.
\end{itemize}

The presented lower bounds are derived in non-split networks, where any two nodes in the communication graph have a common in-neighbor. The lower bounds of $2/3$ and $\sqrt{\frac{d}{2(d+1)}}$ are constructed such that every node receives points whose convex hull is a regular $d$-simplex. Due to the symmetry property of a regular $d$-simplex, any coordinate-free memoryless anonymous algorithm can select only one point as the input for the next round, namely the centroid of the convex hull. This way, we can construct examples where the inputs of the nodes look identical to the inputs of the previous round, up to rotation and scaling. Note that our assumption of coordinate-free memoryless anonymous algorithms does not extend to the MidExtremes algorithm proposed by Függer and Nowak~\cite{FN18}. This is because MidExtremes in some cases requires tie-breaking, e.g., if a node receives an equilateral triangle. To break ties, the algorithm either needs a coordinate system, or some type of node IDs, or randomization. Thus, the algorithm cannot be deterministic, anonymous, and coordinate-free at the same time.

To reach a lower bound of $\sqrt{\frac{2}{3}}$ for one round asymptotically as $d \to \infty$, we consider non-split graphs where any two views of nodes only intersect in one point. Instead of using regular $d$-simplices, we use products of two regular simplices to construct a shape that has $d$ symmetry axes. Due to the symmetry property, the centroid is the only point that any coordinate-free memoryless anonymous algorithm can choose. We are not able to repeat this construction for several rounds, but it shows that the presented algorithm analysis is asymptotically optimal for one round.

\paragraph*{Paper outline.}
In \Cref{sec:reltedWork}, we present the related work. We start the main part of the paper by formalizing the asymptotic agreement problem and different dynamic network models used in this work in Section~\ref{sec:model}. In Section~\ref{sec:AAAlgorithm}, we present the BallMidpoint algorithm in the non-split network model. We analyze its contraction in Section~\ref{sec:BallMidpoint}, the tightness of the analysis in Section~\ref{sec:algAnalysisTightness}, and the generalization to rooted tree networks in Section~\ref{sec:rootedTrees}. We then turn to lower bounds in Section~\ref{sec:LBs}, where we first present the $2/3$ lower bound for $d\ge 8$ with $241$ nodes (Section~\ref{sec:LBEightDim}). We then improve this lower bound to $\sqrt{\frac{d}{2(d+1)}}$ under the assumption that the number of nodes is infinite or for a bounded number of rounds (Section~\ref{sec:LBInfinite}). Finally, in Section~\ref{sec:LBOneRound}, we show that the contraction of any coordinate-free memoryless anonymous deterministic algorithm in non-split networks in a single communication round is lower bounded by $\sqrt{\frac{2d-2}{3d+7}} \xrightarrow{d \to \infty} \sqrt{2/3}$.

\section{Related Work}\label{sec:reltedWork}

\paragraph{Asymptotic and approximate agreement.} Asymptotic agreement algorithms have been predominantly studied on static communication graphs. The problem was first introduced in one dimension by Dolev et al.~\cite{DLPSW86} to overcome the FLP impossibility result~\cite{FLP85} for consensus under asynchronous communication with one fault. In one-dimensional approximate agreement, the goal for the nodes is to converge to a value inside the range of all non-faulty input values. Mendes and Herlihy~\cite{MH13} and Vaidya and Garg~\cite{VG13} in parallel generalized this result to multiple dimensions. Their focus was on finding possibility and impossibility results for multidimensional approximate agreement under convex validity, i.e., the nodes need to converge inside the convex hull of all non-faulty input vectors. Their work has sparked interest in multidimensional agreement that has today been considered in different communication models~\cite{MHVG15,GLW23,GMM26}, on different underlying domains~\cite{NR19,AER23,FGPR26,RV26, AR26}, for applications such as optimization~\cite{LGV21} and federated learning~\cite{El-Mhamdi21}, as well as under relaxed validity conditions~\cite{XV17,CAMBUS2026116044, CMM25, CMMS26, CMMS25, MS26}.

\paragraph{Dynamic networks and contractions.} Asymptotic agreement in dynamic networks was first considered by Charron-Bost et al.~\cite{CBFT15}. The authors showed that asymptotic agreement in dynamic networks only works if the communication graph in each round contains a rooted spanning tree.
They provide a proof showing that weighted averaging algorithms converge in this model. They also show that any product of $n-1$ rooted graphs with $n$ nodes is non-split, that is, an asymptotic agreement algorithm in a non-split network can be easily generalized to rooted graphs.
The asymptotic agreement algorithms have been extended to the multidimensional case in~\cite{CBFN16multidim}, where the Centroid of the convex hull of the received values is selected as the new input. This algorithm has been shown to achieve a contraction of $d/(d+1)$. Függer and Nowak~\cite{FN18} improved this bound to $\sqrt{7/8}$ by presenting the MidExtremes algorithm, which chooses the midpoint between the two furthest received points. As their focus is on efficient local computation, they also present the ApproachExtreme strategy, where each node chooses the midpoint between its own value and the furthest received value. The faster local computation comes at a cost of a worse contraction of $\sqrt{31/32}$. This contraction bound has been recently improved to $\sqrt{1/2}$ for static all-to-all communication algorithms using the BallMidpoint~\cite{M26}. The BallMidpoint strategy lets each node locally compute the smallest enclosing ball that contains the so-called safe area, and then outputs the midpoint of the ball. The analysis of the contraction in this paper relies on the assumption that all locally received convex hulls have a non-empty intersection. Such an assumption cannot be guaranteed in dynamic networks. Our paper adapts the BallMidpoint algorithm to dynamic networks and shows a new upper bound on the contraction in this model.

\paragraph{Lower bounds on contractions.} Fekete~\cite{F86} analyzed contraction bounds in static all-to-all networks under various network failures in one dimension, presenting asymptotically tight bounds.
Függer et al.~\cite{FNS21} show the first lower bounds for dynamic network models subject to an oblivious message adversary. In particular, they show that the contraction rate of $1/3$ is tight for graphs with $2$ nodes. For $n\ge 3$ and $d=1$, they show that the contraction of $1/2$ is tight for dynamic networks which include deaf graphs. This bound also holds for non-split networks. For $n\ge 4$, they show that the contraction under a rooted adversary is at least $\sqrt[n-2]{1/2}$ in one round. Compared to non-split networks, the rooted network model provides no guarantee that the views of the nodes intersect in one round; only after $n-1$ rounds can the nodes be guaranteed to have intersecting views.
In contrast to previous work, our work presents the first multidimensional lower bounds for non-split graphs that improve over the one-dimensional $1/2$ contraction bound.

\section{Model and Definitions}\label{sec:model}

We consider dynamic networks on finitely many nodes\footnote{In Section~\ref{sec:LBInfinite}, we drop the assumption and consider countably infinite sets of nodes.}. We assume that each node holds a state, which is updated over time, taking values in $\values$, a $d$-dimensional $\R$-vector space equipped with an inner product $\map{\innprod \cdot \cdot}{\values \times \values} \R$, meaning it is bilinear, symmetric, and positive-definite. Communication graphs in a dynamic network can change over time. In particular, we assume that the communication graphs are chosen adversarially.

In this work, we consider deterministic algorithms that solve asymptotic agreement.  We assume that the algorithms are \emph{memoryless}, that is, the only memory that a node can hold between communication phases, which we call its \emph{state}, is a single value from the set of all possible values that the nodes want to solve consensus for.  This state is the only data a node can send to other nodes, and in turn it only receives such states from other nodes. Additionally, we assume that the algorithms are \emph{anonymous}, that is, the nodes do not know their own ID, nor from which nodes they received inputs.  Furthermore, the execution of the algorithm shall be \emph{coordinate-free}, i.e., equivariant under similarity maps: translation, scaling, and orthogonal linear maps\footnote{that is, linear maps $\map L \values \values$ preserving the inner product, meaning $\innprod x y = \innprod{\apply L x}{\apply L y}$ for all $x,y \in \values$}.

\paragraph{Adversary.} We define the adversary as a class of communication graphs $\adversary$, which assigns to each possible set of nodes $\nodes$ a subset $\apply \adversary \nodes \subseteq \digraphs \nodes$ of the set of all simple directed graphs on $\nodes$ without loops.
For a set of nodes $\nodes$, the adversary may choose any sequence $\Gamma = \tuple{G_i}_{i \in \N_{>0}} \in \apply \adversary \nodes ^{\N_{>0}}$, which will influence the execution of each node in the network.  We furthermore define $\adversary^\ast \defeq \bigunion_\nodes \apply \adversary \nodes ^{\N_{>0}}$.
In this paper, $\nodes$ ranges over all finite sets.
We consider two dynamic network models. In the \emph{rooted tree model}, each communication graph is a rooted spanning tree. In the non-split network model, we define a set of non-split graphs $\nonsplit \nodes$ on the set of nodes $\nodes$, which contains graphs whose closed in-neighborhoods have pairwise nonempty intersection. In the \emph{non-split network model}, we assume that the adversary chooses non-split graphs as the set of communication graphs $\adversary = \tuple{\nonsplit \nodes}_{\nodes}$. We assume that the communication is synchronous, i.e., it is round-based. Each adversarially chosen graph $G_i$ is associated with one round $\round \in \N_{>0}$ of the synchronous communication.

\paragraph{Algorithm execution.} To define the execution of an algorithm, for a node $v \in \nodes$ and a graph $G \in \digraphs \nodes$ we first define the set of in-neighbors $\innbhd G v \defeq \set{w \in \nodes \given \tuple{w,v} \in \edges G}$ as well as the closed in-neighborhood of $v$ as $\cinnbhd G v \defeq \innbhd G v \union \set v$, which is its in-neighborhood including itself. Let $\tuple{q_v}_{v \in \nodes} \in \values^\nodes$ denote a configuration of states of the nodes.  We define the multiset of received inputs of $v$ for the communication graph $G$ as $\received{v}{G}{q} = \multiset{q_w}_{w \in \cinnbhd G v}$.

Given an adversary $\adversary$, an \emph{instance} of this model is defined by a tuple $J = \tuple{\transition,I,\Gamma}$, consisting of a transition function $\map \transition {\multi \values} {\values}$ (which defines the algorithm), a configuration of initial values $I = \tuple{I_v}_{v \in \nodes} \in \values^\nodes$, and a sequence of graphs $\Gamma = \tuple{G_i}_{i \in \N_{>0}} \in \apply \adversary \nodes ^{\N_{>0}} \subseteq \adversary^{\ast}$ for some set of nodes $\nodes$.  To satisfy being coordinate-free, we require that the transition function $\map \transition {\multi \values} {\values}$ satisfies $\apply L { \apply \transition M} = {\apply \transition {\apply L M}} = \apply \transition {\multiset{\apply L m}_{m \in M}}$ for all \emph{similarity maps} $\map L \values \values$, i.e., compositions of translations, scalings, and linear orthogonal maps.

We define the execution of an instance as the sequence of states for each node that result from locally updating the state by applying the transition function to all of the states in the closed in-neighborhood of that node according to the current graph.  More precisely, for an instance $J = \tuple{\transition,I,\Gamma}$ for nodes $\nodes$, we define the states $\state J r v$ for $v \in \nodes$ and $r \in \N_0$ recursively via
\begin{itemize}
  \item $\state J 0 v = I_v$
  \item $\state J r v = \apply \transition {\received{v}{G_r}{\stater J {r-1}}}$.
\end{itemize}
Observe the anonymous nature of this process: The transition function takes a multiset of values as input, which remains the same if the same states come from different nodes.

\paragraph{Asymptotic agreement.}\label{def:asymptoticAgreement} Using an execution, we can now define the asymptotic agreement problem, which we want to solve in this paper: 

\begin{definition}[Asymptotic agreement~\cite{FN18}]
    An algorithm solves the \emph{asymptotic agreement} problem in a given dynamic network model if, for every instance $J$, the following properties are satisfied:  
    \begin{itemize}
        \item \emph{Convergence:} For every node $v\in V$, the sequence $(\state J r v)_{r\in\N_0}$ converges.
        \item \emph{Agreement:} If $(\state J r v)_{r\in\N_0}$ and $(\state J r w)_{r\in\N_0}$ converge, they converge to the same state.
        \item \emph{Validity:} If $(\state J r v)_{r\in\N_0}$ converges, then its limit is inside the convex hull of all initial states $I$.
    \end{itemize}
\end{definition}

We now present a general coordinate-free memoryless anonymous deterministic algorithm that we later use to solve asymptotic agreement. In each round, every node executes \Cref{alg:node}, where $\map \transition {\multi \values} \values$ is the (abstract) transition function of the algorithm. In each round $r \in \N_{>0}$, a node $v$ starts with state $q_v' = \state J {r-1} v$, sends this value to $O_v = \outnbhd {G_r} v$, receives $M_v = \received v {G_r} {q'_v} = \set{\state J {r-1} w \given w \in \cinnbhd{G_r}{v}}$, and outputs $q_v = \apply \transition {M_v} = \state J r v$, in accordance with the mathematical definition of our model.

\begin{algorithm}
  \caption{one round at node $v \in \nodes$}
  \label{alg:node}
  \DontPrintSemicolon
  \KwIn{old state $q_v' \in \values$ of $v$; recipients $O_v \subseteq \nodes$; received points $M_v \subseteq \values$}
  \KwOut{new state $q_v \in \values$}
  send $q_v'$ to all $w \in O_v$\;
  receive the set of points $M_v$\;
  $q_v \gets \apply \transition {M_v}$\;
  \Return{$q_v$}
\end{algorithm}

\paragraph{Contraction.} We evaluate the convergence speed of asymptotic agreement algorithms using the contraction, which we define similarly to~\cite{M26}. We first define the smallest enclosing ball of a set of points: 
\begin{definition}[Smallest enclosing ball]\label{def:smallestEncBall}
  Let $M \in \multi\values$ be a multiset of points in $\values$.  We define its \emph{circumcenter} as $\cntr M \in \values$ and its \emph{circumradius} $\radius M \in \R_{\geq 0}$ as the unique center and radius of a ball such that $M \subseteq \minball M \defeq \ball {\cntr M} {\radius M}$ where $\radius M$ is minimized, and we call $\minball M$ the \emph{smallest enclosing ball} of $M$.
\end{definition}

Note that the smallest enclosing ball is always unique~\cite{jung1901ueber}. In the following, we use the smallest enclosing ball to define the contraction in an algorithm in one round:

\begin{definition}[One-round contraction]
    For two bounded sets $A,B\subseteq \values$, we define the contraction from $A$ to $B$ as $\contraction A B \defeq \radius B / \radius A$, the ratio of the radius of the smallest enclosing ball of $B$ to the one of $A$.  If $\radius A = 0$, all states coincide and equivariance forces this point as output (\Cref{lem:symcentroid}), so $\radius B = 0$.  We then define $\contraction A B = 0$.

    For an instance $J = \tuple{\transition, I, \Gamma}$, let $M_{r-1} = \set{\state J {r-1} v}_{v \in \nodes}$ be the states of the nodes at the beginning of round $r$ and $M_r = \set{\state J r v }_{v \in \nodes}$ be the new states of the nodes outputted at the end of round $r$ using \Cref{alg:node}.  Then, the contraction $\roundcontraction J r$ of instance $J$ in round $r$ is defined as $\contraction {M_{r-1}} {M_r}$, the contraction from $M_{r-1}$ to $M_r$.

    The contraction rate of instance $J$ is then defined as $\crate J \defeq \limsup_{k \to \infty} \sqrt[k]{\prod_{r=1}^k \roundcontraction J r}$.  For a given transition function $\transition$, we define its contraction rate as $\cratet \transition \defeq \sup_{I \in \values^\nodes \commaand \Gamma \in \adversary^\ast} \crate {\tuple{\transition,I,\Gamma}}$, the supremum over all contraction rates of all instances with transition function~$\transition$.
\end{definition}

In our analysis, we will focus on the contraction for the non-split network model only. In this model, the convex hulls of the received states of any two local views of the nodes have a common intersection, which allows us to show smaller contraction rates. Note that there is a reduction from rooted trees to non-split networks, showing that similar overlaps of views can also be achieved in the rooted tree model~\cite{CBFT15}. Note that this reduction requires $n-1$ rounds, and it is not memoryless.

\section{Asymptotic Agreement Algorithm}\label{sec:AAAlgorithm}

In this section, we present the BallMidpoint algorithm to solve asymptotic agreement in dynamic networks. We show that it converges in the rooted tree model and present an upper bound on the contraction rate of this algorithm in the non-split network model. Additionally, we show that the analysis of the upper bound is exactly tight for one round.

The following definition presents the \bmp algorithm. As our setting is synchronous and memoryless, an algorithm is completely determined by a transition function $\map \transition {\multi \values} \values$:

\begin{definition}[\bmp]\label{def:bmp}
  We define the \bmp algorithm as \Cref{alg:node} together with the transition function $\map \tbmp {\multi \values} \values$ where $M \mapsto \cntr M$. Here, $\cntr M$ is the circumcenter of the smallest enclosing ball of $M$ as in \Cref{def:smallestEncBall}.
\end{definition}

Note that \bmp is coordinate-free:  balls are equivariant under similarities ($\apply L {\ball c \rho} = \ball{\apply L c}{s\rho}$, where $s$ is the scaling factor of $L$), hence $\cntr {\apply L M} = \apply L {\cntr M}$.

\subsection{Contraction of BallMidpoint in non-split networks}\label{sec:BallMidpoint}

In this section, we prove the contraction of the \bmp algorithm (as defined in \Cref{def:bmp}) in non-split networks:

\begin{theorem}\label{thm:BMPupperBound}
   The \bmp algorithm solves asymptotic agreement in non-split networks, and its contraction rate is upper bounded by $\sqrt{\frac {2d} {3d+1}}$.
\end{theorem}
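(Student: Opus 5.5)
The proof splits into two parts. The first is a one-round contraction lemma: for every round $r$, $\radius{M_r} \le \sqrt{\frac{2d}{3d+1}}\,\radius{M_{r-1}}$. The second derives asymptotic agreement from this lemma together with validity of the circumcenter.

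\textbf{Setup.} Fix a round and translate so that $\cntr{M_{r-1}} = 0$, and write $R = \radius{M_{r-1}}$. For a node $v$ with received multiset $M_v$, let $c_v = \cntr{M_v}$ be its new state and $r_v = \radius{M_v}$. I will use three facts.
\begin{itemize}
\item[(a)] \emph{Center bound.} It is standard that $c_v$ is a convex combination $\sum_k \lambda_k x_k$ of points $x_k \in M_v$ lying on the sphere $\partial \ball{c_v}{r_v}$. Since every $x_k$ satisfies $\|x_k\| \le R$, averaging gives $R^2 \ge \sum_k \lambda_k \|x_k\|^2 = \|c_v\|^2 + r_v^2$. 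Hence $\|c_v\|^2 \le R^2 - r_v^2$.
\item[(b)] \emph{Non-split bound.} In a non-split graph, any two nodes $v,w$ have a common in-neighbor whose state $p$ lies in both $M_v$ and $M_w$. Therefore $\|c_v - c_w\| \le \|c_v - p\| + \|p - c_w\| \le r_v + r_w$.
\item[(c)] \emph{Jung-type identity for the output set.} Let $\rho = \radius{M_r}$ and $c^* = \cntr{M_r}$. Then $c^*$ is a convex combination $\sum_{i=1}^k \mu_i c_i$ of at most $k \le d+1$ outputs with $\|c_i - c^*\| = \rho$. This gives the identity $\sum_{i,j}\mu_i\mu_j\|c_i - c_j\|^2 = 2\rho^2$ and the inequality $\rho^2 \le \sum_i \mu_i\|c_i\|^2$.
\end{itemize}

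\textbf{Combining.} From (a) and (c) we get $\rho^2 \le R^2 - \sum_i \mu_i r_i^2$. From (b) and (c) we get
\[
2\rho^2 \le \sum_{i\neq j}\mu_i\mu_j (r_i + r_j)^2 .
\]
One checks the extremal case $k = d+1$, $\mu_i = \frac{1}{d+1}$, $r_i = r$. There the second bound reads $2\rho^2 \le 4r^2\bigl(1 - \frac{1}{d+1}\bigr)$, i.e.\ $\rho^2 \le \frac{2d}{d+1}r^2$, and the first reads $\rho^2 \le R^2 - r^2$. Balancing the two gives exactly $\rho^2 = \frac{2d}{3d+1}R^2$.

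The general claim is that the maximum of $\rho^2$ under these two constraints, over all weights $\mu$ (with at most $d+1$ of them nonzero) and all radii $r_i \ge 0$, equals $\frac{2d}{3d+1}R^2$.

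\textbf{Main obstacle.} The main difficulty is this optimization with unequal weights and radii. My first attempt is to write $s = \sum_i \mu_i r_i$ and $q = \sum_i \mu_i r_i^2$, and expand
\[
\sum_{i\neq j}\mu_i\mu_j (r_i+r_j)^2 = 2q - 2\sum_i \mu_i^2 r_i^2 + 2s^2 - 2\sum_i \mu_i^2 r_i^2 .
\]
I would then try to show that this is at most $\frac{4d}{d+1}\,q$. This uses $s^2 \le q$ together with a lower bound on $\sum_i \mu_i^2 r_i^2$; since at most $d+1$ weights are nonzero, Cauchy--Schwarz gives $\sum_i \mu_i^2 r_i^2 \ge s^2/(d+1)$. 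If this clean inequality fails, I would fall back on a Lagrange-multiplier or symmetrization argument showing that the extremum is attained at equal weights and equal radii. Once the inequality holds, we have $\rho^2 \le \frac{2d}{d+1}q$ and $\rho^2 \le R^2 - q$. Eliminating $q$ yields $\rho^2\bigl(1 + \frac{d+1}{2d}\bigr) \le R^2$, which is the claimed bound.

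\textbf{Agreement.} Validity is immediate: each $c_v$ lies in the convex hull of $M_v$, so the convex hulls of the states form a nested sequence contained in the hull of $I$. The one-round lemma gives $\radius{M_r} \le \lambda^r \radius{I}$ with $\lambda = \sqrt{\frac{2d}{3d+1}} < 1$, and hence the contraction rate is at most $\lambda$. Moreover, every state at time $r' \ge r$ lies in the hull of $M_r$, which has diameter at most $2\lambda^r \radius{I}$. So each node's sequence of states is Cauchy and converges, all limits coincide, and the common limit lies in the hull of $I$.
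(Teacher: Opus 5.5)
Your proof is correct, but the one-round contraction step takes a different route from the paper's. Your hedge is unnecessary. With $T=\sum_i\mu_i^2r_i^2\ge s^2/(d+1)$ and $s^2\le q$, you get $2q+2s^2-4T\le 2q+\frac{2(d-1)}{d+1}s^2\le\frac{4d}{d+1}q$ for every $d\ge 1$, so no Lagrange or symmetrization fallback is needed.

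\textbf{How the paper argues.} The paper invokes an external Helly-type lemma, taken from prior work via the Soddy--Gosset theorem: pairwise intersecting balls, each inflated by $\sqrt{2d/(d+1)}$ about its own center, have a common point. It picks such a point $q$. It then shows that every output lies within $\sqrt{2d/(3d+1)}\,R$ of the explicit point $x=(1-\nu)c_P+\nu q$ with $\nu=\frac{d+1}{3d+1}$. This uses convexity of the squared norm together with the per-node bound $\|c_v-c_P\|^2\le R^2-r_v^2$, which you also derive in (a), via the variance identity rather than a supporting active point.

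\textbf{How your route differs.} You never construct a common point. You work directly with the smallest enclosing ball of the outputs. Combining Carath\'eodory, the variance identity, the non-split bound $\|c_i-c_j\|\le r_i+r_j$, and Cauchy--Schwarz amounts to a weighted Jung inequality $\rho^2\le\frac{2d}{d+1}\sum_i\mu_ir_i^2$ for centers of pairwise intersecting balls. You then pair it with the $\mu$-average of (a).

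\textbf{What each approach buys.} Yours is self-contained and elementary, with no appeal to Soddy--Gosset. The paper's is more modular and its key lemma is a per-ball statement: a single point close to every inflated ball at once. Your averaged inequality does not obviously yield that, though it suffices for the contraction bound. Both give exactly the tight constant $\sqrt{2d/(3d+1)}$, and your convergence and validity argument coincides with the paper's.
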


We will prove the upper bound on the contraction rate by upper bounding the contraction in every round of the algorithm. Before proving this theorem, we present some important results that help derive the full proof.

In the following, let $\tuple{p_v}_{v \in \nodes} \subseteq \values$ be a set of points and $\tuple{B_v}_{v \in \nodes}$ be a set of balls in $\values$ such that for all $v \in \nodes$ we have $B_v = \ball {p_v} {\rho_v}$ for some $\rho_v \in \R_{\geq 0}$.

\begin{lemma}\label{lem:intersect}
  Let $\tuple{B_v}_{v \in \nodes}$ be a collection of balls that pairwise intersect.  Then for $\tuple{B'_v}_{v \in \nodes}$, where $B'_v = \ball {p_v} {\rho'_v}$ with $\rho'_v = \sqrt{\frac{2d}{d+1}} \rho_v$ for $v \in \nodes$ we have $\bigintersect_{v \in \nodes} B'_v \neq \emptyset$.
\end{lemma}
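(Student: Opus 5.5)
The plan is to reduce to finitely many balls using Helly's theorem, and then study a weighted minimax point. The scaled balls $B'_v$ are compact and convex in the $d$-dimensional space $\values$. By Helly's theorem (for an infinite family of compact convex sets, finite intersection property plus Helly), it suffices to show that any $k \le d+1$ of the $B'_v$ have a common point. Fix such a subfamily $v_1,\dots,v_k$. Degenerate radii $\rho_v = 0$ are handled by a limiting argument, or directly since such a point then lies in every other ball. So assume $\rho_v > 0$ and consider
\[
  t^\ast \defeq \min_{x \in \values} \max_{i} \frac{\|x - p_{v_i}\|}{\rho_{v_i}} .
\]
The claim is exactly $t^\ast \le \sqrt{2d/(d+1)}$. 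The minimum is attained at some $x^\ast$, since the function is continuous and coercive.

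Next I would derive the optimality conditions at $x^\ast$. Let $S$ be the set of active indices, i.e.\ those with $\|x^\ast - p_{v_i}\| = t^\ast \rho_{v_i}$. The function $\max_i \|x-p_{v_i}\|/\rho_{v_i}$ is a maximum of convex functions, so $0$ lies in the convex hull of the gradients $(x^\ast - p_{v_i})/(t^\ast \rho_{v_i}^2)$ over $i\in S$. Hence there are weights $\mu_i \ge 0$, summing to $1$ and supported on $S$ (at most $d+1$ of them), such that $x^\ast = \sum_i \mu_i p_{v_i}$. Here $\mu_i$ is proportional to $\lambda_i/\rho_{v_i}^2$ for the KKT multipliers $\lambda_i$. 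With these barycentric weights, the standard variance identity gives
\[
  \sum_{i,j} \mu_i \mu_j \|p_{v_i} - p_{v_j}\|^2 \;=\; 2 \sum_i \mu_i \|p_{v_i} - x^\ast\|^2 \;=\; 2 (t^\ast)^2 \sum_i \mu_i \rho_{v_i}^2 .
\]
Pairwise intersection means $\|p_{v_i} - p_{v_j}\| \le \rho_{v_i} + \rho_{v_j}$. Substituting yields
\[
  (t^\ast)^2 \;\le\; \frac{\sum_{i\neq j} \mu_i\mu_j (\rho_{v_i}+\rho_{v_j})^2}{2\sum_i \mu_i \rho_{v_i}^2}.
\]

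In the equal-radii case the numerator equals $4\rho^2(1-\sum_i\mu_i^2)$. Since $\sum_i \mu_i^2 \ge 1/k \ge 1/(d+1)$, this gives $(t^\ast)^2 \le 2(1 - \frac1{d+1}) = \frac{2d}{d+1}$, which is precisely the claimed factor. This matches the regular simplex with touching balls as the extremal configuration.

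The main obstacle is the case of unequal radii. The crude estimate $(\rho_i+\rho_j)^2 \le 2\rho_i^2 + 2\rho_j^2$ loses too much: it yields $2\sum_i \mu_i\rho_i^2(1-\mu_i)/\sum_i\mu_i\rho_i^2$, which is not bounded by $2d/(d+1)$ when the weights concentrate on small balls. I would instead maximize the right-hand ratio jointly over $\rho$ and over $\mu$ in the simplex of dimension $k-1$. The ratio is homogeneous of degree $0$ in $\rho$, so I would normalize $\sum_i \mu_i\rho_i^2 = 1$ and maximize the quadratic form $\sum_{i\ne j}\mu_i\mu_j(\rho_i+\rho_j)^2$ via Lagrange multipliers. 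The aim is to show that the maximum is attained at equal radii and uniform $\mu$.

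If this optimization turns out messy, my fallback is to use more of the structure than the variance identity does. The weights $\mu$ are not arbitrary, since they come from the KKT condition tied to the radii. One way to exploit this is a Gram-matrix argument on the active centers relative to $x^\ast$: the vectors $p_{v_i}-x^\ast$ have lengths $t^\ast\rho_{v_i}$ and $0$ in their $\mu/\rho^2$-weighted cone hull. A further alternative is to first shrink the larger balls toward the configuration where all active balls touch pairwise, and show that this cannot decrease $t^\ast$.
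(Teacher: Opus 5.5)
Everything up to your displayed bound
\[
(t^\ast)^2 \le \frac{\sum_{i\neq j}\mu_i\mu_j(\rho_i+\rho_j)^2}{2\sum_i\mu_i\rho_i^2}
\]
is correct: the Helly reduction, the case $\rho_v=0$, the KKT weights $\mu_i\propto\lambda_i/\rho_i^2$, and the variance identity. But the proof stops exactly where the lemma goes beyond Jung's theorem. For unequal radii you only state the \emph{aim} that the right-hand side is maximized at equal radii and uniform $\mu$. That is a gap as written. The aim is true, though, and it needs neither Lagrange multipliers nor your fallbacks.

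Restrict all sums to the support of $\mu$, of size $k'\le d+1$. When $t^\ast>0$ we have $k'\ge 2$, since $k'=1$ would force $x^\ast=p_i$ and hence $t^\ast=0$. Set $a_i=\mu_i\rho_i$. Expanding $(\rho_i+\rho_j)^2$ gives the exact identity
\[
\frac{\sum_{i\neq j}\mu_i\mu_j(\rho_i+\rho_j)^2}{2\sum_i\mu_i\rho_i^2}
= 1+\frac{\bigl(\sum_i a_i\bigr)^2-2\sum_i a_i^2}{\sum_i a_i^2/\mu_i}.
\]
Two estimates finish it. First, $\sum_i a_i^2\ge \frac{1}{k'}\bigl(\sum_i a_i\bigr)^2$. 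Second, Cauchy--Schwarz with $\sum_i\mu_i=1$ gives $\bigl(\sum_i a_i\bigr)^2\le \sum_i a_i^2/\mu_i$. Since $1-\frac{2}{k'}\ge 0$, this yields
\[
(t^\ast)^2\le 1+\Bigl(1-\frac{2}{k'}\Bigr)\frac{\bigl(\sum_i a_i\bigr)^2}{\sum_i a_i^2/\mu_i}\le 2-\frac{2}{k'}\le\frac{2d}{d+1}.
\]
Equality requires $\rho_i$ to be constant on the support and $\mu$ to be uniform, which confirms your guess about the extremal configuration.

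With this step added, your proof is complete and takes a genuinely different route from the paper. The paper gives no argument of its own. It invokes Theorem~5.4 of~\cite{CMMS26}, which rests on the Soddy--Gosset theorem for $d+1$ mutually tangent spheres; the factor $1+\frac{d-1}{d+1+\sqrt{2d(d+1)}}$ is one plus the inner Soddy radius for unit spheres.

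Your argument is the classical proof of Jung's theorem with radius weights. It is self-contained and handles unequal radii directly, rather than through the tangent extremal configuration. It also gives the sharper factor $\sqrt{2(k-1)/k}$ when only $k\le d+1$ balls are present. The paper's route is shorter on the page and identifies the extremal configuration geometrically, but it delegates all of the work to the cited result.
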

\begin{proof}
    This result follows from Theorem 5.4 in~\cite{CMMS26}. This paper uses the Soddy-Gosset theorem~\cite{Sod37,SoddyGossetOtherProofEnglish} to prove that scaling the radius of each of $d+1$ mutually tangent hyperspheres in $d$ dimensions by a factor of $1+\frac{d-1}{d+1+\sqrt{2(d+1)d}} = \sqrt{\frac{2d}{d+1}}$ around its center gives all hyperspheres a common intersection point. 
\end{proof}

We next show that the circumcenter of the smallest enclosing ball of a set of points lies in the convex hull of the set of points. This lemma helps us prove the upper bound on the contraction and later show that the validity condition from \Cref{def:asymptoticAgreement} is satisfied.

\begin{lemma}\label{lem:centerhull}
  Let $Q \subseteq \values$ be a nonempty bounded set of points.  Then $\cntr Q \in \closure{\chull{Q}}$.
\end{lemma}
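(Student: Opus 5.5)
We argue by contradiction, using a separating hyperplane to move the ball's center closer to the set. Write $c = \cntr Q$, $\rho = \radius Q$, and $K = \closure{\chull Q}$. Since $Q$ is bounded and nonempty, $K$ is a nonempty compact convex set. Moreover $Q \subseteq \ball c \rho$ and the ball is closed and convex, so $K \subseteq \ball c \rho$. Suppose $c \notin K$. We will produce a ball of radius strictly smaller than $\rho$ that contains $Q$, contradicting the minimality in \Cref{def:smallestEncBall}.

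Let $k \in K$ be the nearest point of $K$ to $c$; it exists and is unique because $K$ is compact and convex in the inner product space $\values$. Set $u = c - k \neq 0$. The projection characterization gives $\innprod{x - k}{u} \le 0$ for all $x \in K$. For small $t \in (0,1]$, consider the shifted center $c_t = c - t u$. For any $x \in K$,
\[ \norm{x - c_t}^2 = \norm{x - c}^2 + 2t\innprod{x - c}{u} + t^2\norm{u}^2 . \]
Writing $x - c = (x - k) - u$ gives $\innprod{x-c}{u} \le -\norm u^2$. Hence
\[ \norm{x-c_t}^2 \le \rho^2 - (2t - t^2)\norm u^2 . \]
For $t \in (0,1]$ this is strictly less than $\rho^2$, uniformly in $x$. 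So $Q \subseteq K \subseteq \ball{c_t}{\rho'}$ with $\rho' = \sqrt{\rho^2 - (2t-t^2)\norm u^2} < \rho$, which is the desired contradiction.

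The case $\rho = 0$ needs a separate check: then $Q = \set c$, so $c \in K$ trivially and the argument above is not needed. Otherwise the computation is routine. The only point requiring care is that $K$ is closed, which is needed for the nearest point to exist. This is exactly why the statement is phrased with the closure. Existence of $\cntr Q$ is taken from \Cref{def:smallestEncBall} and the uniqueness result cited there.
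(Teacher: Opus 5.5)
Your proof is correct and takes essentially the same route as the paper. Both project the putative center onto the closed convex set $\closure{\chull Q}$ and use the variational inequality of the projection to show that moving the center toward the projection strictly shrinks the enclosing radius. The paper simply takes your $t=1$, recentering at the projection point itself and getting radius $\sqrt{R^2-\delta^2}$, which is exactly your bound at $t=1$. So your one-parameter family and the separate $\rho=0$ case are unnecessary but harmless.
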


\begin{proof}
  Let $c \in \values \setminus \closure{\chull Q}$.  Since $\closure{\chull Q}$ is closed and convex, there is a unique $p \in \closure{\chull Q}$ minimizing $\norm{c-p}$, which satisfies $\innprod{q-p}{c-p} \le 0$ for all $q \in Q$.  Hence, with $\delta \defeq \norm{p-c} > 0$, \[
    \norm{q - c}^2 = \norm{q - p}^2 - 2 \innprod{q - p}{c - p} + \norm{p - c}^2 \ge \norm{q - p}^2 + \delta^2
  \] for all $q \in Q$.  Thus $\ball{p}{\sqrt{R^2 - \delta^2}} \supseteq Q$ for $R \defeq \sup_{q \in Q} \norm{q - c}$, so no ball containing $Q$ with center outside $\closure{\chull Q}$ can have minimum radius, i.e., $\cntr Q \in \closure{\chull Q}$.
\end{proof}

The following is a well-known result that any $d$-dimensional smallest enclosing ball can be defined by at most $d+1$ points~\cite{jung1901ueber}: 

\begin{lemma}\label{lem:activepoints}
  Let $Q \subseteq \values$ be a nonempty bounded set of points.  Then there exists a set of \emph{active points} $A \subseteq \closure Q \intersect \bndry{\minball Q}$ with $\cardinality A \le d+1$ such that $\minball A = \minball Q$.
\end{lemma}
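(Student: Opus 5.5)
The plan is to first reduce to the case where $Q$ is compact, and then extract a finite subset that determines the ball. Since $Q$ is bounded, $\closure Q$ is compact. Any closed ball containing $Q$ also contains $\closure Q$, and conversely $Q \subseteq \closure Q$. So $\minball Q = \minball{\closure Q}$, and we may assume $Q$ is compact. Write $c = \cntr Q$, $R = \radius Q$, and $S = \closure Q \intersect \bndry{\minball Q}$, which is the set of points of $\closure Q$ at distance exactly $R$ from $c$. The set $S$ is compact, and it is nonempty when $R > 0$. If $R = 0$, then $Q$ is a single point and $A = Q$ works.

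The key step is to show that $c \in \chull S$. Suppose not. Then, since $\chull S$ is compact and convex, there is a separating direction $u$ with $\innprod{s - c}{u} > 0$ for all $s \in S$; this is the same separation argument as in the proof of \Cref{lem:centerhull}. By compactness of $S$, this gives a uniform lower bound $\innprod{s-c}{u} \ge \eta > 0$ on $S$. Consider moving the center to $c + \epsilon u$ for small $\epsilon > 0$.

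For points $s$ in a neighborhood of $S$, the distance strictly decreases below $R$, because $\norm{s - c - \epsilon u}^2 = \norm{s-c}^2 - 2\epsilon\innprod{s-c}{u} + \epsilon^2$. For the remaining points of $\closure Q$, which form a compact set at distance at most $R - \gamma$ from $c$ for some $\gamma > 0$, a small enough $\epsilon$ keeps their distance below $R$. Together these show that the maximum distance from $c + \epsilon u$ to $\closure Q$ is less than $R$, contradicting the minimality of $R$. This compactness and neighborhood bookkeeping is the only delicate part of the argument.

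Having $c \in \chull S$, Carathéodory's theorem in the $d$-dimensional space $\values$ gives a set $A \subseteq S$ with $\cardinality A \le d+1$ and $c \in \chull A$. It remains to show $\minball A = \minball Q$. Since $A \subseteq \closure Q$, we have $\radius A \le R$. For the reverse inequality, take any center $c'$. Write $c = \sum_a \lambda_a a$ as a convex combination with $\lambda_a \ge 0$ and $\sum_a \lambda_a = 1$. Then
\[
\sum_a \lambda_a \norm{a - c'}^2 = \sum_a \lambda_a \norm{a-c}^2 + \norm{c - c'}^2 = R^2 + \norm{c-c'}^2,
\]
where the cross term vanishes because $\sum_a \lambda_a (a - c) = 0$. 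Hence $\max_{a \in A} \norm{a - c'} \ge R$, with equality only if $c' = c$. Therefore $\radius A = R$ and $\cntr A = c$, using the uniqueness of the smallest enclosing ball. This gives $\minball A = \minball Q$ and completes the proof.
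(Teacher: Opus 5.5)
Your proof is correct. The paper does not prove this lemma at all: it states it as a well-known fact and cites Jung (1901). Your argument is therefore a self-contained alternative rather than a reconstruction of the paper's route.

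Your decomposition has four steps:
\begin{enumerate}
\item Pass to the compact set $\closure Q$.
\item Show by a first-order perturbation of the center that $c$ lies in the convex hull of the contact set $S = \closure Q \cap \bndry{\minball Q}$. This is a sharpened form of \Cref{lem:centerhull}, which only gives $c \in \closure{\chull Q}$.
\item Extract at most $d+1$ contact points by Carath\'eodory.
\item Verify $\minball A = \minball Q$ with the convex-combination identity.
\end{enumerate}
The last step is exactly \Cref{lem:maxnorm} applied to the finite set $A$, whose points are all at distance $R$ from $c \in \chull A$, so you could simply cite it. Your identity also already forces $c' = c$, so you do not need uniqueness of the smallest enclosing ball there.

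The citation buys brevity and appeal to the classical statement. Your version buys self-containedness using only tools the paper already develops, namely the nearest-point separation of \Cref{lem:centerhull} and the computation of \Cref{lem:maxnorm}.

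Two cosmetic remarks. In the expansion you wrote $\epsilon^2$ where it should be $\epsilon^2\norm{u}^2$; this is fine if you normalize $u$. Also, the uniform margin $\eta$ needs no separate compactness argument: if $p$ is the nearest point of $\chull S$ to $c$ and $u = p - c$, then $\innprod{s-c}{u} \ge \norm{u}^2$ for all $s \in S$.
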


To prove the contraction rate, we upper bound the distance from any possible circumcenter to a selected point inside the smallest enclosing ball of all points. To this end, we use the non-split property of the considered networks, which guarantees that any two locally received sets have a common point. The following two lemmas provide vectors inside the local convex hull that we use to later bound the distances. Note that the proof is similar to the proof in~\cite{M26}. However, since we cannot assume that all locally received sets have a common intersection point, we compare the distances to a point chosen with respect to the intersection of the increased balls according to \Cref{lem:intersect}. 

\begin{lemma}\label{lem:positivescalar}
  Let $Q \subseteq \values$ be nonempty and bounded, and let $u \in \values$ be any vector.  Then there exists a point $a \in \closure Q$ such that $$\innprod{a - \cntr Q}{u} \ge 0.$$
\end{lemma}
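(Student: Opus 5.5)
We argue by contradiction, using the minimality of the smallest enclosing ball. The case $u = 0$ is trivial, since any $a \in \closure Q$ works, so assume $u \neq 0$. Write $c \defeq \cntr Q$ and $R \defeq \radius Q$. Suppose, for contradiction, that $\innprod{a - c}{u} < 0$ for every $a \in \closure Q$. The set $\closure Q$ is compact, being the closure of a bounded set in the finite-dimensional space $\values$. The map $a \mapsto \innprod{a-c}{u}$ is continuous, so it attains a maximum on $\closure Q$, and this maximum is strictly negative. Hence there is an $\eta > 0$ with $\innprod{a - c}{u} \le -\eta$ for all $a \in \closure Q$.

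Next we shift the center slightly in the direction $-u$. For $t > 0$ set $c_t \defeq c - t u$. For every $a \in \closure Q$ we expand
\[
  \norm{a - c_t}^2 = \norm{a - c}^2 + 2t\innprod{a - c}{u} + t^2\norm{u}^2 \le R^2 - 2t\eta + t^2 \norm{u}^2 .
\]
Choose $0 < t < 2\eta/\norm{u}^2$. Then the right-hand side is some constant $R_t^2 < R^2$, uniform in $a$. Therefore $Q \subseteq \closure Q \subseteq \ball{c_t}{R_t}$ with $R_t < R$. This contradicts the minimality of $\radius Q$ in \Cref{def:smallestEncBall}, so some $a \in \closure Q$ satisfies $\innprod{a - c}{u} \ge 0$.

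The only delicate point is getting a \emph{uniform} margin $\eta$. Without it, the strict inequalities $\innprod{a-c}{u}<0$ could approach $0$, and no single $t$ would work. Compactness of $\closure Q$ supplies this margin, and this is why the lemma asserts $a \in \closure Q$ rather than $a \in Q$.

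An alternative route uses \Cref{lem:activepoints}. One takes active points $A \subseteq \closure Q \intersect \bndry{\minball Q}$ and uses the standard fact that $c \in \chull A$. This follows from \Cref{lem:centerhull} applied to $A$, since $\minball A = \minball Q$. Write $c = \sum_i \lambda_i a_i$ as a convex combination of points $a_i \in A$. Then $\sum_i \lambda_i \innprod{a_i - c}{u} = 0$, so at least one term satisfies $\innprod{a_i - c}{u} \ge 0$. This version is shorter, and I would use it as the main proof, keeping the perturbation argument as a self-contained backup.
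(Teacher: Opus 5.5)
Both of your arguments are correct. The one you designate as the main proof follows essentially the paper's idea. The paper places $c$ in $\overline{\mathrm{conv}(Q)} \subseteq \mathrm{conv}(\overline{Q})$ via Lemma~\ref{lem:centerhull}, with $\mathrm{conv}(\overline Q)$ closed because $\overline Q$ is compact. It then observes that if $\overline Q$ lay in the open halfspace $\{x : \langle x-c,u\rangle<0\}$, so would its convex hull, and hence $c$.

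Your active-point version replaces this closure step by a finite convex combination. You should say explicitly that $\mathrm{conv}(A)$ is closed because $A$ is finite; that is what upgrades $c\in\overline{\mathrm{conv}(A)}$ to $c\in\mathrm{conv}(A)$. The price is invoking Lemma~\ref{lem:activepoints}, which the paper only cites. Its usual proof already shows that $c$ lies in the convex hull of the contact points, which is stronger than what you are proving. As a bonus, your version yields $a$ on the boundary sphere of the smallest enclosing ball. That is exactly what Lemma~\ref{lem:mainlemma} later extracts by combining the two lemmas in the same way.

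Your perturbation argument is the genuinely different route. It uses only the minimality in Definition~\ref{def:smallestEncBall} plus compactness of $\overline Q$, and bypasses Lemma~\ref{lem:centerhull} entirely. Your uniform margin $\eta$ plays the role that closedness of $\mathrm{conv}(\overline Q)$ plays in the paper. It is the same variational idea that proves Lemma~\ref{lem:centerhull}, applied directly to one halfspace rather than to the whole hull. It is the most self-contained of the three arguments, whereas the paper's version is the shortest once Lemma~\ref{lem:centerhull} is available.
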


\begin{proof}
  For the halfspace $H_{\cntr Q,u} = \set{x \in \values \given \innprod{x - \cntr Q}{u} < 0}$, we have $\cntr Q \notin H_{\cntr Q,u}$, since $\innprod{\cntr Q - \cntr Q}{u} = 0$.  Since $\cntr Q \in \closure{\chull Q} \subseteq \chull{\closure Q}$ with $\chull{\closure Q}$ closed, the convexity of $H_{\cntr Q,u}$ yields that $\closure Q$ is not contained in $H_{\cntr Q,u}$, as otherwise its convex hull would be.  Hence there is $a \in \closure Q$ with $a \notin H_{\cntr Q,u}$.
\end{proof}

\begin{lemma}\label{lem:mainlemma}
  Let $P \subseteq \values$ be a bounded set of points and let $M \subseteq P$.  We then have
  \begin{equation}
    {\radius P}^2 \ge {\radius M}^2 + \norm{\cntr M - \cntr P}^2 \mathperiod
  \end{equation}
\end{lemma}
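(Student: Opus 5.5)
The idea is to use \Cref{lem:positivescalar} in the direction pointing away from $\cntr P$: it gives a point of $\closure M$ on the far side of $\cntr M$, and such a point must lie close to the boundary of $\minball P$. Write $c_M \defeq \cntr M$ and $c_P \defeq \cntr P$.

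If $c_M = c_P$, the claim reduces to $\radius P \ge \radius M$. This holds because $M \subseteq P \subseteq \minball P$, so $\minball P$ is a ball containing $M$, and $\radius M$ is the minimal such radius.

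Otherwise, set $u \defeq c_M - c_P \neq 0$ and apply \Cref{lem:positivescalar} to $M$ with this $u$. We obtain a point $a \in \closure M$ with $\innprod{a - c_M}{u} \ge 0$. We expand
\[
  \norm{a - c_P}^2 = \norm{(a - c_M) + u}^2 = \norm{a - c_M}^2 + 2\innprod{a - c_M}{u} + \norm{u}^2 \ge \norm{a - c_M}^2 + \norm{c_M - c_P}^2 .
\]
Now bound both sides. Since $a \in \closure M \subseteq \closure P \subseteq \minball P$ (the ball is closed), the left side is at most $\radius P^2$. For the right side we need $\norm{a - c_M} \ge \radius M$, and this is not automatic for an arbitrary $a \in \closure M$.

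This is the main obstacle, and the fix is to pick $a$ among the points of $M$ that lie on the boundary of $\minball M$. By \Cref{lem:activepoints}, there is a set of active points $A \subseteq \closure M \intersect \bndry{\minball M}$ with $\minball A = \minball M$. In particular $\cntr A = c_M$, and every $x \in A$ satisfies $\norm{x - c_M} = \radius M$. Apply \Cref{lem:positivescalar} to the finite nonempty set $A$ instead of $M$, again with $u = c_M - c_P$. Because $\cntr A = c_M$, this yields $a \in A$ with $\innprod{a - c_M}{u} \ge 0$ and $\norm{a - c_M} = \radius M$. Also $a \in \closure M \subseteq \minball P$. Substituting these into the expansion gives
\[
  \radius P^2 \ge \norm{a - c_P}^2 \ge \radius M^2 + \norm{c_M - c_P}^2 ,
\]
which is the claim.

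Two technical points remain. If $M = \emptyset$ the statement is vacuous or conventional, so assume $M$ is nonempty. We also need $\closure P \subseteq \minball P$, which holds because balls are closed.
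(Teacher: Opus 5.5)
Your proof is correct and, once you switch from $M$ to the active set, it is essentially the paper's argument: apply \Cref{lem:positivescalar} to the active set $A$ from \Cref{lem:activepoints} (so $\cntr A = \cntr M$ and $\norm{a - \cntr M} = \radius M$) with $u = \cntr M - \cntr P$, then expand $\norm{a - \cntr P}^2$ and use $a \in \closure P \subseteq \minball P$. The separate case $\cntr M = \cntr P$ is harmless but unnecessary, since \Cref{lem:positivescalar} also covers $u = 0$.
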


\begin{proof}
  Let $A$ be as in \Cref{lem:activepoints}.  In particular, $\cntr A = \cntr M$, and with \Cref{lem:positivescalar}, there exists $a \in A$ such that $\innprod{a - \cntr M}{\cntr M - \cntr P} \ge 0$.  Since $a \in \bndry{\minball M}$, we have $\norm{a - \cntr M} = \radius M$.  Note also that $a \in \closure P \subseteq \minball P$, and therefore $\norm{a - \cntr P} \leq \radius P$.  We can then calculate
  \begin{align*}
    {\radius P}^2
    &\ge \norm{a - \cntr P}^2 = \innprod {a - \cntr P} {a - \cntr P} \\
    &= \innprod {a - \cntr M + \cntr M - \cntr P} {a - \cntr M + \cntr M - \cntr P} \\
    &= \norm{a - \cntr M}^2 + 2 \innprod {a - \cntr M}{\cntr M - \cntr P} + \norm{\cntr M - \cntr P}^2 \\
    &\ge \norm{a - \cntr M}^2 + \norm{\cntr M - \cntr P}^2 = \radius M ^2 + \norm{\cntr M - \cntr P}^2 \mathperiod     \qedhere
  \end{align*}
\end{proof}

We now present the proof for the main theorem of this section:

\begin{proof}[Proof of \Cref{thm:BMPupperBound}]
  Let $\tuple{p_i}_{i \in \nodes} \subseteq \values$ be any bounded collection of points, and let $G \in \nonsplit \nodes$ be any non-split graph.  Let $P \defeq \set{p_i}_{i \in \nodes}$ and $M_i \defeq \set{p_j \given j \in \cinnbhd G i}$.   Let $\nu \defeq \frac{d+1}{3d+1}$, noting that $1-\nu = \frac{2d}{3d+1}$ and $\frac{1-\nu}{\nu} = \frac{2d}{d+1}$.  Now consider a point $q \in \bigintersect_{i \in \nodes} \ball {\cntr{M_i}}{\sqrt{\frac{1-\nu}{\nu}} \radius{M_i}} \neq \emptyset$, which is a point inside the intersections of increased balls as in \Cref{lem:intersect}.  Such a point exists by \Cref{lem:intersect}:  As $G$ is non-split, the $M_i$ pairwise intersect, and hence so do the balls $\ball{\cntr{M_i}}{\radius{M_i}} \supseteq M_i$.
  
  Now let $x \defeq \cntr P + \nu(q - \cntr P) = (1-\nu)\cntr P + \nu q$, and let $u \defeq \cntr{M_i} - \cntr P$ and $v \defeq \cntr{M_i} - q$.  We then have $\norm{v} \leq \sqrt{\frac{1-\nu}{\nu}}\radius{M_i}$, as well as $\norm{u} \leq \sqrt{\radius{P}^2 - \radius{M_i}^2}$ via \Cref{lem:mainlemma}.  We can now upper bound the distance from any locally computed circumcenter $\cntr{M_i}$ to a designated point~$x$
  \begin{align*}
    \norm{\cntr{M_i} - x}^2
    &= \norm{(1-\nu)u + \nu v}^2 
    \le (1-\nu)\norm{u}^2 + \nu\norm{v}^2 \\
    &\le (1-\nu)\bigl(\radius{P}^2 - \radius{M_i}^2\bigr) + \nu\mult \frac{1-\nu}{\nu}\radius{M_i}^2  
    \frac{2d}{3d+1}\radius{P}^2 \mathcomma
  \end{align*}
  using convexity of the squared norm in the first inequality.
  Taking square roots yields $\norm{\cntr{M_i} - x} \le \sqrt{\frac{2d}{3d+1}}\,\radius{P}$.

  Finally, we show correctness of the \bmp algorithm.  By \Cref{lem:centerhull}, the state a node computes lies in the convex hull of the states it received, and hence in the convex hull of all states of the previous round.  The convex hulls of the state sets of successive rounds are therefore nested.  From the analysis above, the states after round $r$ are contained in a ball whose radius shrinks by the factor $\sqrt{\frac{2d}{3d+1}}$ in every round, so the diameter of the nested hulls tends to $0$.  The nested hulls thus intersect in a unique common point, and since each node's state lies in the hull of its round, the states of all nodes converge to this point.  By nesting, this point lies in the convex hull of the inputs, establishing validity.  This proves that the \bmp algorithm solves asymptotic agreement.
\end{proof}

\subsection{Tightness of the contraction analysis for one round}\label{sec:algAnalysisTightness}

In this section, we present a construction that forces the contraction of the \bmp algorithm to be $\sqrt{\frac{2d}{3d+1}}$ in one round, showing that our previous analysis was tight. The ratio of the new radius $\sqrt{d/(d+1)}$ and the initial radius $\sqrt{(3d+1)/(2(d+1))}$ matches the bound presented in \Cref{thm:BMPupperBound} exactly, so the analysis is tight for every dimension. To prove the lower bound, we will present a point distribution for which the contraction can be lower bounded. One way to realize such a point distribution is to assign the points as inputs to the nodes of the algorithm.

\begin{lemma}\label{lem:maxnorm}
  Let $Q \subseteq \values$ be a finite set of points and let $c \in \chull Q$.  If $R \defeq \norm{q-c} = \norm{q' -c}$ for all $q,q' \in Q$, then $\cntr Q = c$ and $\radius Q = R$.
\end{lemma}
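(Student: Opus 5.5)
The plan is to verify the two halves separately: first that the ball $\ball c R$ encloses $Q$, and then that no smaller ball does. Uniqueness of the smallest enclosing ball~\cite{jung1901ueber} then yields $\cntr Q = c$ and $\radius Q = R$.

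\emph{Enclosing.} This half is immediate. Every $q \in Q$ satisfies $\norm{q-c} = R$, so $Q \subseteq \ball c R$, and therefore $\radius Q \le R$.

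\emph{Minimality.} Let $\ball{c'}{\rho}$ be any ball containing $Q$. We show $\rho \ge R$, with equality forcing $c' = c$. Since $c \in \chull Q$ and $Q$ is finite, write $c = \sum_{q \in Q} \lambda_q q$ with $\lambda_q \ge 0$ and $\sum_q \lambda_q = 1$. Put $w \defeq c' - c$ and expand, as in the proof of \Cref{lem:mainlemma}:
\[
  \norm{q - c'}^2 = \norm{q-c}^2 - 2\innprod{q-c}{w} + \norm{w}^2 = R^2 - 2\innprod{q-c}{w} + \norm{w}^2 .
\]
Take the $\lambda$-weighted average over $q \in Q$. The linear term vanishes because $\sum_q \lambda_q (q - c) = 0$, which gives
\[
  \sum_q \lambda_q \norm{q-c'}^2 = R^2 + \norm{w}^2 .
\]
A convex combination is at most the maximum, so
\[
  \rho^2 \ge \max_{q} \norm{q-c'}^2 \ge R^2 + \norm{c'-c}^2 .
\]
Hence $\rho \ge R$, and $\rho = R$ forces $c' = c$. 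Alternatively, one could invoke \Cref{lem:positivescalar} with $u = c - c'$ to obtain a point $q$ with $\innprod{q-c}{c-c'}\ge 0$, but the averaging argument is cleaner.

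Combining the two halves, $\ball c R$ is the unique smallest enclosing ball of $Q$. No step is a real obstacle; the only point requiring care is that the cross term cancels exactly because $c$ is a convex combination of the points of $Q$.
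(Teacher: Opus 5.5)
Your proof is correct and is essentially the paper's argument: write $c$ as a convex combination of $Q$, average $\norm{q-c'}^2$ with those weights so the cross term cancels to give $R^2+\norm{c-c'}^2$, and conclude that some $q$ has $\norm{q-c'}\ge R$. Your extra observation that $\rho = R$ forces $c'=c$ gives the center directly instead of via uniqueness of the smallest enclosing ball; your aside invoking \Cref{lem:positivescalar} would need its halfspace argument rerun at $c$ rather than at $\cntr Q$, but your main proof does not depend on it.
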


\begin{proof}
  Via assumptions, we have $\ball c R \supseteq Q$.  Since $c \in \chull Q$, we may write $c = \sum_{q \in Q} a_q q$ with $0 \le a_q \le 1$ and $\sum_{q \in Q} a_q = 1$.  Let now $c' \in \values$ and $R' \in \R_{\geq 0}$ with $\ball {c'} {R'} \supseteq Q$.  Then $\sum_{q \in Q} a_q \norm{q-c'}^2 = \sum_{q \in Q} a_q \norm{q-c}^2 - 2\innprod{\sum_{q \in Q} a_q q - c}{c' - c} + \norm{c - c'}^2 = R^2 + \norm{c - c'}^2 \ge R^2$.  Hence there exists $q \in Q$ with $\norm{q-c'} \ge R$, therefore $R' \ge R$.
\end{proof}

\begin{theorem}
  Let $d \ge 2$.  There is an instance such that the contraction of the \bmp algorithm for one round is $\sqrt{\frac{2d}{3d+1}}$.
\end{theorem}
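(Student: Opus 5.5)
The plan is to make every inequality in the proof of \Cref{thm:BMPupperBound} tight at once. Equality needs three things. The circumcenters $\cntr{M_i}$ should form a regular $d$-simplex centered at $\cntr P$. Each view $M_i$ should satisfy $\radius P^2 = \radius{M_i}^2 + \norm{\cntr{M_i}-\cntr P}^2$, which is equality in \Cref{lem:mainlemma}. And the balls $\ball{\cntr{M_i}}{\radius{M_i}}$ should be in the extremal Soddy--Gosset configuration of \Cref{lem:intersect}.

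Concretely, I fix a regular simplex with unit vertices $s_0,\dots,s_d$ centered at the origin, put $t \defeq \sqrt{d/(d+1)}$ and $c_i \defeq t s_i$, and aim for $\cntr{M_i}=c_i$. The $c_i$ have circumradius $t$ around $0$, so the output set $\set{c_i}$ has $\radius{\set{c_i}} = t$, by \Cref{lem:maxnorm} with $0\in\chull{\set{c_i}}$. The target initial radius is $R \defeq \sqrt{(3d+1)/(2(d+1))}$, which forces $\radius{M_i} = r \defeq \sqrt{R^2-t^2}$. A first computation gives $r = 1/\sqrt 2$, and this is exactly the radius at which the balls $\ball{c_i}{r}$ are pairwise tangent. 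The ratio is then $t/R = \sqrt{2d/(3d+1)}$, as claimed.

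Next I populate the views with points. I take one \enquote{shared} node per pair $\set{i,j}$ whose point lies in $\ball{c_i}{r}\intersect\ball{c_j}{r}$, and let it be an in-neighbor of both $i$ and $j$; this makes the graph non-split. I also take \enquote{private} points of $M_i$ on $\bndry{\ball{c_i}{r}}\intersect\bndry{\ball 0 R}$. Every such point $a$ satisfies $a-c_i\perp c_i$. I choose them as a regular $(d-1)$-simplex in $c_i^\perp$ (scaled to radius $r$), so that $0\in\chull P$. Then \Cref{lem:maxnorm} applied to the outer points gives $\cntr P=0$ and $\radius P=R$, provided every shared point has norm at most $R$. \Cref{lem:maxnorm} applied to each $M_i$ gives $\cntr{M_i}=c_i$ and $\radius{M_i}=r$, provided all points of $M_i$ lie on $\bndry{\ball{c_i}{r}}$ and $c_i\in\chull{M_i}$. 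To keep the in-neighborhood relation consistent, a private point is held by its own node and received only by node $i$. The shared nodes themselves also need views, and the simplest choice is to let them listen to a single designated hub-free node, keeping the graph non-split. If that bookkeeping turns out awkward, I would instead realize the view sets directly as the point distribution, as the paper suggests.

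The main obstacle is the condition $c_i\in\chull{M_i}$. With exact tangency, the shared points are the midpoints $(c_i+c_j)/2$. These have a negative component along $c_i$, while the private points have zero component along $c_i$. So $c_i$ is \emph{not} in the hull, and $\cntr{M_i}$ would shift inward. My first remedy is to use the freedom that \Cref{thm:BMPupperBound} only needs the final ratio $t/R$, not tangency. I would take the shared points on $\bndry{\ball{c_i}{r}}\intersect\bndry{\ball{c_j}{r}}$, which requires slightly overlapping balls, i.e.\ a slightly larger $r$ relative to $t$, with the private points on the outer sphere tilted to have a positive $c_i$-component. I would then re-solve for $t,r,R$ with $R^2=t^2+r^2$ kept as the active-point condition and check that the ratio remains $\sqrt{2d/(3d+1)}$. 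If the ratio degrades, the fallback is to make the views' hulls contain $c_i$ by letting one shared point per view sit on the far side, and to verify the norm constraint $\le R$ by direct computation. This verification, using the inner products $\innprod{s_i}{s_j}=-1/d$, is where I expect the real work to lie.
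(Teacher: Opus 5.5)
Your first two paragraphs are essentially the paper's construction, and they already prove the statement. The paper's $u_i$ play the role of your $c_i=ts_i$. Its shared points are the midpoints $\frac12(u_i+u_j)$. Its private points are the antipodal pair $u_i\pm\frac12(u_{i-1}-u_{i+1})$ (indices mod $d+1$), which lies in $u_i+u_i^\perp$ at distance $1/\sqrt2$ from $u_i$. The gap is that you then declare this construction broken for a false reason, and steer toward a remedy that cannot give equality.

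Your private points of view $i$ are $c_i+rw_k$, where $\{w_k\}_{k=1}^{d}$ is a regular simplex of unit circumradius in the $(d-1)$-dimensional space $c_i^\perp$. Because it is full-dimensional there, its circumcenter $0$ is its centroid. Hence $\frac1d\sum_k(c_i+rw_k)=c_i\in\chull{M_i}$; this is the same fact you used to get $0\in\chull P$. That the shared midpoints have negative $c_i$-component only shows that $c_i$ lies on a supporting hyperplane of $\chull{M_i}$, not outside it. So \Cref{lem:maxnorm} gives $\cntr{M_i}=c_i$ and $\radius{M_i}=1/\sqrt2$ with no change needed.

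The remedy, by contrast, provably loses exactness. In your parametrization $t/R=1/\sqrt{1+(r/t)^2}$, which increases with $t/r$. Non-splitness forces $\ball{c_i}{r}\cap\ball{c_j}{r}\neq\emptyset$, i.e.\ $\norm{c_i-c_j}^2=\frac{2(d+1)}{d}t^2\le 4r^2$. This gives $t^2/r^2\le\frac{2d}{d+1}$, with equality exactly at tangency. Any overlap therefore gives a ratio strictly below $\sqrt{2d/(3d+1)}$.

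The graph also needs repair: every node's output enters the new state set, and every pair of nodes needs a common in-neighbor. Your hub idea breaks non-splitness. Suppose the shared node of $\{i,j\}$ has closed in-neighborhood consisting of itself and one node $h$. For each $k\notin\{i,j\}$ (such $k$ exist since $d+1\ge 3$), $h$ must lie in the view of group $k$. Then the state of $h$ would lie in every one of the pairwise tangent balls $\ball{c_k}{r}$, whose common intersection is empty. The private nodes' own views are also left unspecified.

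The paper's assignment resolves this. Every node of group $i$ gets the same closed in-neighborhood $N_i$: the private nodes, and the shared node of each pair $\{i,j\}$ assigned to group $\max\{i,j\}$. Then every node outputs some $c_i$, the new radius is exactly $t$, and $N_i\cap N_j$ contains the shared node of $\{i,j\}$.
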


\begin{proof}
  Consider points $\set{u_i}_{i = 0}^d \subseteq \values$ that form a regular $d$-simplex with edge length $\sqrt 2$ with the origin as its centroid, implying $\innprod{u_i}{u_i} = \frac{d}{d+1}$ and $\innprod{u_i}{u_j} = -\frac{1}{d+1}$ for all distinct $i,j \in \set{0,\dotsc,d}$.  Now consider the points $p_{i,j} = \frac 1 2 (u_i + u_j)$ for $i,j \in \set{0,\dotsc,d}$, $j<i$ and the points $p_{i,-1} = u_i - \frac 1 2 (u_{i-1 \mod (d+1)} - u_{i+1 \mod (d+1)})$ and $p_{i,-2} = u_i + \frac 1 2 (u_{i-1 \mod (d+1)} - u_{i+1 \mod (d+1)})$ for $i \in \set{0,\dotsc,d}$.

  To define $G$, we first define a set of nodes $\nodes = \set{\tuple{i,j} \in \set{0,\dotsc,d} \times \set{-2,-1,0,\dotsc,d} \given j < i}$.  We next declare the state of each node $\tuple{i,j} \in \nodes$ to be $s_{\tuple{i,j}} = p_{i,j}$.  For each node $\tuple {i,j} \in \nodes$, we define $N_i = \set{\tuple{i,k}}_{k \in \set{-2,-1,0,\dotsc,i-1}} \union \set{\tuple{k,i}}_{k \in \set{i+1,\dotsc,d}}$ and we define a communication graph defined via $\cinnbhd G {\tuple{i,j}} = N_i$ for each $\tuple {i,j} \in \nodes$.  For $S_i = \multiset{s_v}_{v \in N_i}$, we have $\received {\tuple{i,j}} G {s} = S_i$ for all $\tuple{i,j} \in \nodes$.
  
  We first show that $G$ defined as above is non-split. Note that for any two nodes $\tuple{i_1,j_1}, \tuple{i_2,j_2} \in \nodes$, if $i_1 = i_2$, we have $\cinnbhd G {\tuple{i_1,j_1}} = \cinnbhd G {\tuple{i_2,j_2}}$, and otherwise, assuming $i_2 < i_1$ w.l.o.g., we have $\tuple{i_1,i_2} \in \cinnbhd G {\tuple{i_1,j_1}} \intersect \cinnbhd G {\tuple{i_2,j_2}}$.  This also implies that $\multiset{S_v}_{v\in \nodes}$ are pairwise intersecting.

  Next, we show that \bmp chooses $u_i$ as the new points, i.e., $\cntr {S_i} = u_i$.  To see this, note first that the radius of the smallest ball with circumcenter $u_i$ that contains $S_i$ is $\sqrt{1/2}$:
  \begin{align*}
    \norm{u_i - p_{i,-2}}
    &= \norm[\Big]{- \frac 1 2 (u_{i-1 \mod (d+1)} - u_{i+1 \mod (d+1)})}\\
    &= \frac 1 {\sqrt 2} = \norm{u_i - p_{i,-1}} \mathperiod
  \end{align*}
  Also, for any $0\le j<i\le d$, and similarly for $0\le j<i<k\le d$ we have $\norm{u_i - p_{i,j}} = \norm{u_i - p_{k,i}} = \frac 1 {\sqrt 2}$.  Since $p_{i,-2}$ and $p_{i,-1}$ are antipodal with respect to $u_i$, no smaller radius than $\frac 1 {\sqrt 2}$ is possible, which shows $u_i = \cntr {S_i}$.

  It remains to show the contraction by bounding the radius of the initial smallest enclosing ball and the newly computed one.  Let $M_\ast \defeq \multiset{s_v}_{v \in \nodes}$ be the multiset of initial states and $M_{\ast+1} \defeq \multiset{\apply \transition {S_i}}_{\tuple{i,j} \in \nodes}$ the multiset of states after one round. We start by determining the radius of the smallest enclosing ball of the initial values. For all $i,j,k \in \set{0,\dotsc,d}$ with $j<i$, we have $\norm{p_{i,j}} = \sqrt{\frac{d-1}{2(d+1)}} \le \sqrt{\frac{3d+1}{2(d+1)}} = \norm{p_{k,-2}} = \norm{p_{k,-1}}$.  The points of maximal norm are $p_{k,-1}$ and $p_{k,-2}$, whose midpoint is $u_k$; since the $u_k$ average to the origin, \Cref{lem:maxnorm} shows $\radius {M_\ast} = \sqrt{\frac{3d+1}{2(d+1)}}$.  For the newly computed points, we can compute the radius exactly: For all $\tuple{i,j} \in \nodes$, we have $\norm{\apply \transition {S_i}} = \norm{u_i} = \sqrt{\frac{d}{d+1}}$, which implies $\radius{M_{\ast+1}} = \sqrt{\frac{d}{d+1}}$.  This shows $\contraction {M_\ast}{M_{\ast+1}} = \radius{M_{\ast+1}}/\radius{M_\ast} = \sqrt{\frac{2d}{3d+1}}$.
\end{proof}

\subsection{\bmp in rooted tree networks}\label{sec:rootedTrees}

In the following, we state that \bmp converges in the rooted tree model:

\begin{theorem}
    The \bmp algorithm can be used to solve asymptotic agreement in the rooted tree model. 
\end{theorem}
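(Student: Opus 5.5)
We reduce to the non-split case via the result of Charron-Bost et al.~\cite{CBFT15}: the product of any $n-1$ rooted graphs on $n$ nodes is non-split. The paper remarks that this reduction is not memoryless. We therefore use BallMidpoint in a ``batched'' form. We partition time into phases of $n-1$ consecutive rounds. During a phase, each node does not update its state via $\tbmp$ after every round. Instead it keeps forwarding the set of phase-initial states it has heard of so far: it sends its current set and takes the union with what it receives. At the end of the phase, node $v$ holds exactly the multiset $M_v$ of phase-initial states of its in-neighbours in the product graph $G_{k(n-1)+1}\circ\dots\circ G_{(k+1)(n-1)}$ (with self-loops). Node $v$ then sets its new state to $\cntr{M_v}$. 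This matches the paper's statement that the resulting algorithm is not memoryless, while still using the BallMidpoint rule.

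The key steps are as follows.
\begin{enumerate}
\item Show that the union-forwarding computes exactly the closed in-neighbourhood of $v$ in the product graph. This is a routine induction on the rounds of a phase, using that self-loops are always present.
\item Invoke~\cite{CBFT15} to conclude that this product graph is non-split. Note that every rooted spanning tree, once self-loops are added, is a rooted graph in their sense.
\item Apply the per-round argument from the proof of \Cref{thm:BMPupperBound} verbatim to the phase. Via \Cref{lem:intersect}, \Cref{lem:mainlemma} and the choice of $x$, all new states lie in a ball of radius $\sqrt{2d/(3d+1)}\,\radius P$, where $P$ is the set of phase-initial states.
\item Use \Cref{lem:centerhull} to get that the new states lie in the convex hull of the phase-initial states. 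The convex hulls across phases are therefore nested, and their diameters shrink geometrically.
\end{enumerate}

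Convergence and agreement then follow exactly as in the proof of \Cref{thm:BMPupperBound}: the nested compact hulls shrink to a single point, and each node's state at the end of each phase lies in the corresponding hull. We must also handle the intermediate rounds inside a phase, since convergence is required for the full sequence $(\state J r v)_r$. For those rounds we let the output state be the node's last phase-end state, which also lies in the current hull, so the whole sequence converges to the common point. Validity holds because the limit lies in every hull, in particular in $\chull I$.

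The main obstacle is step~1 together with the modelling issue. The paper's formal model only permits a node to send its single state. Carrying sets of points and a round counter modulo $n-1$ requires stepping outside this model, and it requires knowing $n$ (or an upper bound on it). I would make this explicit by stating that the theorem concerns this non-memoryless variant. If $n$ is unknown, I would try a fallback argument: an unknown-$n$ version using doubling phase lengths, where any phase of length at least $n-1$ still yields a non-split product. Phases that are too short must not increase the hull, which \Cref{lem:centerhull} guarantees.
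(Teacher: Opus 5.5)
Your proposal is correct and takes the same route as the paper. You group $n-1$ rounds into a macro-round whose product graph is non-split by the lemma of Charron-Bost et al., then apply BallMidpoint once at the end of each macro-round, reusing the non-split analysis from the proof of \Cref{thm:BMPupperBound}. You also handle soundly several details the paper's short proof leaves implicit: forwarding the phase-initial states to realize the product graph, what nodes output in intermediate rounds, and the need to know $n$ (or to use growing phase lengths).
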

\begin{proof}
    To prove this theorem, we first use the reduction from rooted trees to non-split networks from~\cite{CBFT15}, showing that a product of $n-1$ rooted trees forms a non-split graph (\cite{CBFT15}, Lemma 6). The proof of the lemma provides translation of the rooted tree model into a macro-round with a non-split communication graph. By applying \bmp from \Cref{def:bmp} at the end of each macro-round, we receive an asymptotic agreement algorithm in rooted tree networks.
\end{proof}

\section{Lower Bounds on Contraction in Non-Split Networks}\label{sec:LBs}

In this section, we show lower bounds on the contraction of coordinate-free memoryless anonymous deterministic algorithms in non-split networks. We start by proving a bound of $2/3$ for every round of an algorithm using a constant number of nodes in $d\ge 8$ dimensions. This lower bound is constructed in such a way that the initial inputs build a symmetric structure, which after one round is reached again up to rotation and a scaling factor. We show that the scaling factor is $2/3$ in every round for every coordinate-free memoryless anonymous deterministic algorithm. We then present a similar construction that uses infinitely many nodes, while providing a lower bound of $\sqrt{\frac{d}{2(d+1)}}$ on the contraction for every round of an algorithm. We also show how this construction can be achieved if the number of nodes is finite and the number of rounds is fixed for an algorithm. Additionally, we show that the lower bound for the contraction within one round converges to $\sqrt{2/3}$ for $d \to \infty$. This shows that the \bmp algorithm has an asymptotically optimal contraction for one round.

Our lower bound constructions rely on the fact that any transition function is forced to map \enquote{sufficiently symmetric} sets to a unique point. This property allows us to determine which point any coordinate-free memoryless anonymous deterministic algorithm will choose for a given example. For example, for a hypercube, a regular simplex, or a regular polytope, we have sufficiently many symmetries such that the transition function must map its defining corner points to its barycenter/center of mass. Next, we will formally prove this statement.

\begin{lemma}\label{lem:symcentroid}
    Let $M \in \multi \values$ be a finite multiset.  Consider its affine symmetry group $\apply {\operatorname{Sym}} M$, defined as the group of all transformations $L$ of $\values$ that are a composition of translations and orthogonal linear transformations, which leave $M$ invariant, meaning $M = \apply L M = \multiset{\apply L p \given p \in M}$.  Now consider the set of fixed points $\values^{\apply {\operatorname{Sym}} M} = \set{p \in \values \given \forall L \in \apply {\operatorname{Sym}} M \qcolon \apply L p = p}$.  Then any transition function $\transition$ must map $M$ to a point in $\values^{\apply {\operatorname{Sym}} M}$.
\end{lemma}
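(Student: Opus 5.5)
The proof uses nothing beyond the equivariance built into the definition of a coordinate-free transition function. By the model, $\transition$ satisfies $\apply L {\apply \transition M} = \apply \transition {\apply L M}$ for every similarity map $L$. Every element of $\apply {\operatorname{Sym}} M$ is a composition of translations and orthogonal linear maps, so it is a similarity with scaling factor $1$, and the equivariance identity applies to it.

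Fix $L \in \apply {\operatorname{Sym}} M$ and set $p \defeq \apply \transition M$. Since $L$ leaves $M$ invariant as a multiset, $\apply L M = M$. Hence
\[
  \apply L p = \apply L {\apply \transition M} = \apply \transition {\apply L M} = \apply \transition M = p .
\]
Because $L$ was arbitrary in the symmetry group, $p$ is fixed by every element of $\apply {\operatorname{Sym}} M$. Therefore $p \in \values^{\apply {\operatorname{Sym}} M}$.

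The only point needing care is the multiset invariance, which is where anonymity enters. The transition function sees $\multiset{\apply L m}_{m \in M}$ with multiplicities. Invariance $M = \apply L M$ therefore means $L$ permutes the points of $M$ while preserving multiplicities, so the two inputs to $\transition$ are literally the same multiset. No node identities could distinguish them, so there is no real obstacle here.

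As a remark, I would note that this also justifies the paper's claim in the definition of contraction. If $\radius A = 0$, then $M$ consists of copies of a single point $a$. The symmetry group then contains all orthogonal maps about $a$, for example the point reflection $x \mapsto 2a - x$. The fixed-point set of that group is $\set a$, so $\transition$ must output $a$. Similarly, whenever the fixed-point set is a single point, such as the centroid of a regular simplex, $\transition$ is forced to choose that point. This is how the later constructions will use the lemma.
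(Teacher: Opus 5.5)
Your proof is correct and is the same argument as the paper's: equivariance of $\transition$ under each $L \in \apply {\operatorname{Sym}} M$, together with $\apply L M = M$, gives $\apply L {\apply \transition M} = \apply \transition M$, so the output lies in the fixed-point set. Your added remark on the $\radius A = 0$ case is also correct: the point reflection $x \mapsto 2a - x$ has $a$ as its only fixed point, which spells out the justification the paper only cites in the definition of contraction.
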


\begin{proof}
    In our model, any transition function $\transition$ is equivariant under $\apply {\operatorname{Sym}} M$.  Therefore, for any $L \in \apply {\operatorname{Sym}} M$, we have $\apply L {\apply \transition M} = \apply \transition {\apply L M} = \apply \transition M$.  By definition, this means $\apply \transition M \in \values^{\apply {\operatorname{Sym}} M}$.
\end{proof}

As an example, consider a multiset of points that have a rotation as a symmetry.  Then all points on the axis of rotation are fixed points, and any transition function must map to this rotation axis.  Similarly, for a reflection symmetry, it must map to the hyperplane of reflection.  Furthermore, if we have enough symmetries such that these axes and hyperplanes intersect in a unique point, then this point must be the output of the transition function.

\begin{definition}[centroid]
    If $\values^{\apply {\operatorname{Sym}} M} = \set{\cntr M}$ is a singleton for a multiset $M \in \multi \values$, we call $\cntr M$ the \emph{centroid} of $M$. 
\end{definition}

The abuse of notation conflating the centroid and the circumcenter is justified by the fact that they coincide:  Any symmetry $L$ of $M$ maps $\minball M$ to an enclosing ball of $\apply L M = M$ of the same radius, so by uniqueness of the smallest enclosing ball, $\apply L {\minball M} = \minball M$, and hence $\apply L {\cntr M} = \cntr M$.

\subsection{Lower bound for \texorpdfstring{$d \geq 8$}{d >= 8}}\label{sec:LBEightDim}

 In this section, we show that no algorithm can have a contraction better than $2/3$ for $d \geq 8$. We start with an example that shows that in dimension $2$ we can achieve a lower bound of $\sqrt{\frac 1 3}$. 

 \begin{example}
 Consider the set of nodes $\nodes = \set{0,1,2,3,4,5,6}$ that start with a configuration $\state J 0 6 = 0$ and $\state J 0 k = \expo{\tfrac k 6 2\pi \imaginary}$ for $k \in \set{0,1,2,3,4,5}$. Let the non-split graph be defined by $\cinnbhd G 6 = \set 6$ and $\cinnbhd G k = \set{k,k+1 \mod 6, 6}$.  Node $6$ must choose $\state J 0 6 = 0$ as its next state, as this is the only input it receives. All other nodes receive an equilateral triangle, a regular $2$-simplex, as their input. The equilateral triangle is invariant under rotation around its centroid, so a coordinate-free memoryless anonymous deterministic algorithm cannot single out any other point and must choose the centroid as its next state (\Cref{lem:symcentroid}). For $S_r = \set{\state J r k}_{k \in \nodes}$, we have $S_1 = \set{\sqrt{\tfrac 1 3}\expo{\tfrac 1 {12} 2\pi \imaginary} s \given s \in S_0}$. For the input values, the six states of unit norm average to the origin, so \Cref{lem:maxnorm} gives $\radius {S_0} = 1$, and since multiplication with $\expo{\tfrac 1 {12} 2\pi \imaginary}$ is a rotation, we must have $\radius {S 1} = \sqrt{\tfrac 1 3}$. Since the construction for the new round is equivalent up to rotation and scaling, $\sqrt{\tfrac 1 3}$ is a lower bound on the contraction rate of any coordinate-free memoryless anonymous deterministic algorithm. One round of the presented construction is visualized in \Cref{fig:lowerbound2}.

\begin{figure}
  \centering
  \begin{tikzpicture}[scale=2.2]

    \foreach \k in {0,...,5}{
      \coordinate (v\k) at ({60*\k}:1);
    }
    \coordinate (v6) at (0,0);

    \foreach \k in {0,...,5}{
      \pgfmathtruncatemacro{\knext}{mod(\k+1,6)}
      \draw[navyblue] (v\k) -- (v\knext) -- (v6) -- cycle;
    }

    \foreach \k in {0,...,5}{
      \fill[navyblue] (v\k) circle (1pt);
      \node[navyblue] at ({60*\k}:1.12) {$\k$};
    }

    \foreach \k in {0,...,5}{
      \pgfmathtruncatemacro{\knext}{mod(\k+1,6)}
      \coordinate (c\k) at (barycentric cs:v\k=1,v\knext=1,v6=1);
    }

    \foreach \k in {0,...,5}{
      \draw[black, dotted, thick, -{Latex[length=2.5mm]}, shorten >=2pt] (v\k) -- (c\k);
    }

    \foreach \k in {0,...,5}{
      \pgfmathtruncatemacro{\knext}{mod(\k+1,6)}
      \draw[newcol] (c\k) -- (c\knext) -- (v6) -- cycle;
    }

    \foreach \k in {0,...,5}{
      \fill[newcol] (c\k) circle (0.8pt);
      \node[newcol] at ({30+60*\k}:0.68) {$\k$};
    }

    \fill[navyblue] (v6) circle (1pt);
    \node[navyblue, above,xshift=-3pt,yshift=5.5pt] at (v6) {$6$};
  \end{tikzpicture}
  \caption{Multi-round  $\sqrt{\frac 1 3}$ lower bound for any coordinate-free memoryless anonymous algorithm. }\label{fig:lowerbound2}
  \end{figure}

\end{example}

This construction does not simply extend to higher dimensions. In general, the idea is that in dimension $d$, we form regular unit length $(d-1)$-simplices via points on the unit $(d-1)$-sphere. Then each such $(d-1)$-simplex together with the origin forms a regular unit $d$-simplex. Then, when an algorithm chooses the centroids of each such simplex as the new point, it arrives at the same configuration of points, but multiplied by an element of the orthogonal group (rotations, reflections) and scaled by a factor of $\sqrt{\frac{d}{2(d+1)}}$.  This is exactly what happens in our example for $d=2$, but is not possible for $d=3$ with finitely many points. It is not clear if other constructions work to show a general contraction lower bound of $\sqrt{\frac{d}{2(d+1)}}$ for dimension $d$, but we can construct a similar example in dimension $8$ via octonions with a contraction lower bound of $\sqrt{\frac{8}{2(8+1)}} = \frac 2 3$ for $d \geq 8$.

\begin{lemma}\label{lem:DeightLBconstruction}
  Let $\nodes = E_8 \union \set 0$.  For the non-split graph $G$ defined by $\cinnbhd G c = S^c_0$ for $c \in E_8$ and $\cinnbhd G 0 = \set 0$, with the $S^c_0$ constructed below, any transition function results in the set of all updated values being isometric to $\frac 2 3 E_8 \union \set 0$.
\end{lemma}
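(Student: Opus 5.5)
The plan is to realize $E_8$ as the $240$ unit integral octonions, scaled so that every root has norm $1$, and to build each view $S^c_0$ as a unit-edge regular $8$-simplex that has the origin as one vertex. The key step is to fix one set $T \subseteq E_8$ of $8$ roots with $1 \in T$ and $\innprod{t}{t'} = \frac12$ for all distinct $t,t' \in T$. Then $\norm{t-t'} = 1$ and $\norm{t - 0} = 1$, so $T \union \set 0$ is a regular $8$-simplex with edge length $1$. For $c \in E_8$ I set
\[
  S^c_0 \defeq \set 0 \union c\,T = \set 0 \union \set{c\,t \given t \in T},
\]
where the product is octonion multiplication.

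I would check the following properties in order.
\begin{itemize}
  \item \emph{The construction is well defined.} The integral octonions are closed under multiplication and the norm is multiplicative, so $c\,t$ is again a unit, hence lies in $E_8$. Thus $S^c_0 \subseteq \nodes$.
  \item \emph{The neighborhood is closed.} Since $1 \in T$, we have $c \in S^c_0$, as the definition of $\cinnbhd G c$ requires.
  \item \emph{The views are simplices.} Left multiplication $L_c \colon x \mapsto c\,x$ is linear and satisfies $\norm{c\,x} = \norm c \norm x = \norm x$, so it is orthogonal. Hence each $S^c_0$ is again a regular $8$-simplex with unit edges containing the origin.
  \item \emph{$G$ is non-split.} Node $0$ lies in every closed in-neighborhood, so any two of them intersect.
\end{itemize}

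Next I determine the new states. Node $0$ receives only $\set 0$, so by \Cref{lem:symcentroid} it keeps the state $0$. A node $c \in E_8$ receives a regular simplex. Its symmetry group contains all permutations of the vertices, and the only point fixed by all of them is the barycenter. So \Cref{lem:symcentroid} forces every transition function to output $\frac19 \sum_{s \in S^c_0} s$.

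By linearity of $L_c$, this barycenter equals $\frac19\, c \bigl(\sum_{t\in T} t\bigr)$. A direct computation gives $\norm{\sum_{t \in T} t}^2 = 8 + 8\cdot 7 \cdot \frac12 = 36$, so $\sum_{t\in T} t = 6w$ for some unit octonion $w$. The new state of $c$ is therefore $\frac23\, c\,w$.

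Right multiplication $R_w \colon x \mapsto x\,w$ is also an orthogonal linear map, again by multiplicativity of the norm, and it fixes $0$. The set of all updated values is therefore $\frac23 R_w(E_8) \union \set 0 = R_w\bigl(\frac23 E_8 \union \set 0\bigr)$, which is isometric to $\frac23 E_8 \union \set 0$. This step does not need $w \in E_8$, and non-associativity never enters, since only single products $c\,t$ and linearity of $L_c$ and $R_w$ are used.

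The main obstacle is exhibiting $T$ explicitly. I would work in a concrete model of the unit integral octonions: the $\pm e_i$ together with elements of the form $\frac12(\pm e_0 \pm e_a \pm e_b \pm e_c)$, with index quadruples taken from a fixed Fano-plane coordinatization. The condition $\innprod{1}{t} = \frac12$ forces every other $t \in T$ to have real part $\frac12$. So I would search among the units $\frac12(1 \pm e_a \pm e_b \pm e_c)$ for $7$ elements with pairwise inner product $\frac12$.

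If that search does not close cleanly, I would use an abstract fact instead. The $E_8$ lattice contains $A_8$ as a sublattice, so it contains a set of $9$ pairwise inner-product-$\frac12$ vectors up to scaling, and already $8$ such vectors suffice. Choose $8$ roots with pairwise inner product $\frac12$, and multiply on the left by the inverse of one of them. This moves that root to $1$ while preserving the inner products, because $L_{u^{-1}}$ is orthogonal and maps $E_8$ to itself. The result is a valid $T$ containing $1$.
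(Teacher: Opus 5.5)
Your proposal is correct and follows the paper's proof essentially step for step. Both use a base clique of eight roots containing the octonion unity, views $S^c_0 = \{0\} \cup cT$, the forced barycenter $\frac23\, c\,w$ obtained from $\|\sum_{t\in T} t\| = 6$, and right multiplication by $w$ as the final isometry. The only cosmetic difference is in how the unity ends up in the clique: the paper fixes the explicit clique $b_1,\dotsc,b_8$ and moves the unity onto $b_1$ by conjugating the multiplication with a Weyl group element, whereas you move a clique onto the unity by left multiplication with $u^{-1}$.

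There is one small slip in your fallback. In dimension $8$ there are at most $8$ unit vectors with pairwise inner product $\frac12$, because their Gram matrix $\frac12(\mathrm{I} + \mathbf{1}\mathbf{1}^{\top})$ is nonsingular. So ``$9$ vectors'' should read ``$8$ vectors, i.e.\ $9$ vertices together with the origin''. In any case your direct search already closes with $T = \{1\} \cup \{\frac12(1+e_a+e_b+e_c)\}$ over the seven triples of the Fano plane, since any two such triples share exactly one index.
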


\begin{proof}
To show our bound in dimension $d=8$, we construct the $E_8$ root system in a real inner product space $\values$ of dimension $8$, as a set of points on the unit sphere $\sphere 7 \subseteq \values$.  Fix an orthonormal basis $\tuple{\basis i}_{i = 1}^8$.  We explicitly construct $E_8$ as the set of all points of the form
\begin{itemize}
\item $\frac{1}{\sqrt 2} \mult (\pm \basis i \pm \basis j)$ for integers $1 \leq i < j \leq 8$, and
\item $\frac 1 {\sqrt 2} \mult \sum_{i=1}^8 \pm \frac 1 2 \basis i$ with an even number of $-$ signs in the sum.
\end{itemize}
These can be verified to have unit length, and we have $\card{E_8} = 4 \binom 8 2 + 2^7 = 240$ points.

Now we need a normed division algebra on $\values$ with respect to the inner product of our vector space $\values$, but in such a way that $E_8$ is closed under multiplication, and such that the unity element is in $E_8$.  This is octonion multiplication, and the existence of this with our requirements on $E_8$ can be found in \cite{Co46} and \cite[Chapter 9]{CS03}.  Furthermore, since the Weyl group of the root system $E_8$ acts transitively and (by definition) isometrically on $E_8$~\cite{Hu72}, we may without loss of generality choose any element in $E_8$ to be the unity, by precomposing the multiplication with such an isometry ($x \omult y = \apply \varphi {\apply {\inverse \varphi} x  \omult' \apply {\inverse \varphi} y }$, where $\varphi$ maps the unity element w.r.t. $ \omult'$ to our desired element).

In total, this amounts to the following:  We have an $\R$-bilinear operation $\map \omult {\values \times \values} \values$ such that:
\begin{enumerate}
\item For all $x,y \in \values$ we have $\norm{x \omult y} = \norm x \norm y$, hence $x \mapsto c \omult x$ and $x \mapsto x \omult c$ for $c \in \values$ are linear isometries whenever $\norm c = 1$, and there are no zero-divisors.
\item $b_1 \defeq \frac 1 {\sqrt 2} \sum_{i=1}^8 \frac 1 2 \basis i$ is a two-sided identity element with respect to $\omult$.
\item $E_8$ is closed under $\omult$.
\end{enumerate}

We form a base clique $B \defeq \set{b_1,\dotsc,b_8} \subseteq E_8$ defined by $b_1 = \frac 1 {\sqrt 2} \sum_{i=1}^8 \frac 1 2 \basis i$ and $b_j = \frac 1 {\sqrt 2} (\basis 1 + \basis j)$ for $2\leq j \leq 8$.  We can verify that they each have unit distance to each other, so they form a regular $7$-simplex, and together with the origin a regular $8$-simplex $B_0 \defeq B \union \set 0$.  For each $c \in E_8$, we define $S^c = \set{c \omult b_j}_{j=1}^8$, which also each form a regular $7$-simplex and $S^c_0 = S^c \union \set 0$ a regular $8$-simplex, as multiplication by $c$ is an isometry.  Note that $c \in S^c$, since $b_1$ is the unity.

By \Cref{lem:symcentroid}, any transition function maps the regular simplex $S^c_0$ to its centroid.  Computing the centroid now simply reduces to computing the centroid of $S_0^{b_1} = B_0$.  Let $\hat g = \frac 1 6 \sum_{i=1}^8 b_i$.  Note that $\norm{\sum_{i=1}^8 b_i} = 6$, so $\hat g$ is a unit vector.  We then calculate
\begin{equation*}
  \apply{\transition}{S^c_0} = \frac 1 9 ( (\sum_{i=1}^8 c \omult b_i) + 0 ) = \frac 6 9 (c \omult \hat g) = \frac 2 3 (c \omult \hat g)
\end{equation*}
The map $\map C {E_8} \values$,  $c \mapsto  \apply{\transition}{S^c_0} = \frac 2 3 (c \omult \hat g)$, is therefore just the composition of an isometry, the multiplication with $\hat g$ from the right, and afterwards a scaling of $\frac 2 3$.  The image of $C$ therefore is just another copy of $E_8$, and we can do the same construction recursively. This shows the lemma statement.
\end{proof}

The construction from \Cref{lem:DeightLBconstruction} embeds into any higher dimension $d > 8$, as a multiset with a centroid keeps the same centroid when embedded.  For $n > 241$, we pad the construction with extra nodes $v$ of state $0$ and closed in-neighborhood $\set{v,0}$, which stay at $0$ and keep the graph non-split.  This implies the following theorem.

\begin{theorem}
  The contraction rate of any coordinate-free memoryless anonymous deterministic asymptotic agreement algorithm in dimension $d \geq 8$ with $n\ge 241$ nodes in the non-split communication model has a lower bound of $\frac 2 3$.
\end{theorem}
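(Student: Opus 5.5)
The plan is to iterate \Cref{lem:DeightLBconstruction} round after round. Then I would handle the padding and the higher-dimensional embedding, and finally read off the contraction rate from its definition.

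\textbf{Step 1 (base case, $d=8$, $n=241$).} Take $\nodes = E_8 \union \set 0$ with initial states $I_c = c$ for $c \in E_8$ and $I_0 = 0$, and let $G_1$ be the non-split graph of \Cref{lem:DeightLBconstruction}. First I would confirm that $G_1$ is non-split. Every $S^c_0$ contains $0$, so any two closed in-neighbourhoods meet in node $0$, and $\cinnbhd{G_1}{0} = \set 0$ meets each of them. Node $0$ receives only $0$, so by \Cref{lem:symcentroid} it stays at $0$. Every other node $c$ receives the regular $8$-simplex $S^c_0$ and moves to $\frac 2 3 (c \omult \hat g)$. The new configuration is therefore $\phi(E_8) \union \set 0$, where $\phi = \frac 2 3 \cdot (\,\cdot \omult \hat g)$ is a linear similarity fixing $0$.

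\textbf{Step 2 (recursion).} The key observation is that $\phi$ is linear and equivariance holds for every similarity. For round $r+1$, I would let node $c$ (whose state after round $r$ is $\phi^r(c)$) have closed in-neighbourhood the nodes $\set{c \omult b_j}_j \union \set 0$, i.e.\ reuse $G_{r} = G_1$. Its received multiset is then $\phi^r(S^c_0)$. By coordinate-freeness, $\transition(\phi^r(S^c_0)) = \phi^r(\transition(S^c_0)) = \phi^{r+1}(c)$. By induction, the states after round $r$ are $\phi^r(E_8) \union \set 0$. Since $E_8$ is symmetric ($-c \in E_8$) and lies on the unit sphere, its points average to $0$. \Cref{lem:maxnorm} then gives $\radius{\phi^r(E_8) \union \set 0} = (2/3)^r$, so each round has contraction exactly $2/3$ and hence $\crate J = 2/3$. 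The cardinality of $E_8$ is $240$, which gives $n = 241$.

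\textbf{Step 3 (extensions).} For $d > 8$, I would embed $\values_8 \subseteq \values$ and extend $\phi$ by the identity on the orthogonal complement, which does not change the radii. The subtle point is that $\transition$ sees $S^c_0$ inside a larger space. The symmetry group of the embedded $S^c_0$ contains the symmetries of the simplex within $\values_8$ together with all orthogonal maps of the complement, so the fixed-point set is still the centroid, and \Cref{lem:symcentroid} still applies. Equivariance under the extended $\phi$ again yields the recursion. For $n > 241$, I would add nodes $v$ with state $0$ and $\cinnbhd G v = \set{v,0}$. Node $v$ receives $\multiset{0,0}$ and stays at $0$ by \Cref{lem:symcentroid}. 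Non-splitness is preserved because every neighbourhood contains node $0$, and the radii are unchanged.

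\textbf{Main obstacle.} The step needing the most care is Step 2: the recursion requires that the multiset received in round $r+1$ is precisely a similarity image of $S^c_0$. This holds because $\phi$ is linear, so $\phi^r(c \omult b_j)$ is the state of node $c \omult b_j$, and the in-neighbourhoods can simply be kept fixed. Verifying this indexing, rather than any computation, is where I expect the main attention to go.
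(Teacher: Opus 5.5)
Your proposal is correct and follows the paper's route. You apply \Cref{lem:DeightLBconstruction} and iterate it, making the paper's terse ``do the same construction recursively'' explicit by keeping the graph fixed and using equivariance under the linear similarity $\phi^r$. You then embed into $d>8$ and pad with extra nodes of state $0$ and closed in-neighbourhood $\{v,0\}$.

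There is one small slip in Step~3. Let $U$ be the $8$-dimensional subspace carrying the construction. Then $\phi$ extended by the identity on $U^{\perp}$ is not a similarity, since it scales $U$ by $\tfrac23$ but $U^{\perp}$ by $1$, so equivariance does not apply to it. Instead, extend the isometry $x\mapsto x\omult\hat g$ by the identity and then scale uniformly by $\tfrac23$. Alternatively, observe that each received multiset $\phi^r(S^c_0)$ is itself a regular $8$-simplex and invoke \Cref{lem:symcentroid} directly.
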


\subsection{Lower bound for \texorpdfstring{$\cardinality \nodes = \infty$}{|N| = infinity}}\label{sec:LBInfinite}

In this section, we prove a lower bound of $\sqrt{\frac{d}{2(d+1)}}$ for the contraction rate assuming that we have countably infinitely many nodes.  We take care to not use the axiom of choice in this section.  In our construction, the graphs are locally finite.  Each node only has a finite number of in-neighbors.

To set up our countably infinite construction, consider the real algebraic numbers $K \defeq \overline \Q \intersect R$, or the real closure of $\Q$, where $\overline \Q$ is the algebraic closure of $\Q$.  Let $\values' \defeq \apply {\spann_K}{\set{e_1,\dotsc,e_d}} \subseteq \values$, for an orthonormal basis $\set{e_1,\dotsc,e_d} \subseteq \values$, be the set of algebraic points in $\values$. Next, we define the unit sphere $S_{d-1} \defeq \sphere{d-1} \intersect \values'$ of algebraic points. 

The idea is similar to our construction for $d=2$ and $d=8$:  Take points on the unit sphere, each has an in-neighborhood of a regular simplex, and then we update to their centroids.  To arrive at a self similar structure for general $d$, we take not finitely many points, but all points on $S_{d-1}$, and assign simplices according to the following lemma, illustrated in \Cref{fig:infinite-3d}.

\begin{figure}[t]
\centering
\begin{tikzpicture}[
    x={(-1.2679cm,-0.9299cm)},
    y={( 2.7189cm,-0.4336cm)},
    z={( 0.45cm, 2.5cm)}]
  \def\R{3cm}
  \def\kR{1.8371cm}
  \def\az{25}

  \fill[nodecol!6] (0,0,0) circle[radius=\R];
  \draw[nodecol, thick] (0,0,0) circle[radius=\R];
  \draw[nodecol!60, domain=\az-90:\az+90, samples=60, smooth]
    plot ({cos(\x)},{sin(\x)},0);
  \draw[nodecol!60, dashed, domain=\az+90:\az+270, samples=60, smooth]
    plot ({cos(\x)},{sin(\x)},0);

  \draw[newcol, thick, dashed] (0,0,0) circle[radius=\kR];

  \newcommand{\sample}[7]{%
    \filldraw[simplex] #1 -- #2 -- #3 -- cycle;
    \draw[spoke] (0,0,0) -- #1;
    \draw[spoke] (0,0,0) -- #2;
    \draw[spoke] (0,0,0) -- #3;
    \draw[hray] (0,0,0) -- #4;
    \node[hpt, label={[newcol]right:$h(#6)$}] at #4 {};
    \node[oldpt] at #1 {};
    \node[oldpt] at #2 {};
    \node[oldpt] at #3 {};
    \node[newpt, label={[newcol, xshift=7pt, yshift=2pt]below:$\kappa\,h(#6)$}] at #5 {};
    \node[nodecol, #7] at #1 {$#6$};
  }
  \sample{(-0.3534,-0.4556,0.8165)}{(0.5720,-0.0779,0.8167)}{(-0.2190,0.5335,0.8166)}
         {(0,0,1)}{(0,0,0.6124)}{x}{left}

  \node[origin, label={[yshift=1pt, xshift=2pt]below:$0$}] at (0,0,0) {};

  \node[nodecol] at ($(0,0,0)+(-2.55cm,-2.3cm)$) {$\mathcal S^{2}$};
  \node[newcol]  at ($(0,0,0)+(-1.8cm,-1.8cm)$) {$\kappa\,\mathcal S^{2}$};
\end{tikzpicture}
\caption{Visualization of \Cref{thm:lower} for $d=3$, with $V=\mathcal S^{2}\cup\{0\}$ and $\kappa=\sqrt{3/8}$. Every point of the unit sphere is a node (blue). A node $x$ receives $\sigma(x)$ (shaded) together with $0$ and moves to the centroid $\kappa\,h(x)$ (red), where $h(x)$ (hollow) satisfies $\langle x,h(x)\rangle=\sqrt{2/3}$. The new states form the sphere of radius $\kappa$ (dashed).}

\label{fig:infinite-3d}
\end{figure}

\begin{lemma}\label{lem:bijection}
  Let $d \geq 3$.  For any $\tau \in K$ with $0 < \tau < 1$, there is a bijection $\map h {S_{d-1}} {S_{d-1}}$ such that $\innprod x {\apply h x} = \tau$ for all $x \in S_{d-1}$.
\end{lemma}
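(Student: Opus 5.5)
The plan is a back-and-forth construction over a fixed explicit enumeration of $S_{d-1}$. First, $S_{d-1}$ is countable and can be enumerated without any choice. The real algebraic numbers $K$ admit an explicit enumeration: list integer polynomials by height and, for each, its real roots in increasing order. Hence $\values' \cong K^d$ and its subset $S_{d-1}$ (cut out by the decidable condition $\sum x_i^2 = 1$) come with an explicit enumeration $s_0, s_1, s_2, \dotsc$. I will then build $h$ as an increasing union of finite injective partial maps $h_0 \subseteq h_1 \subseteq \dotsb$, each satisfying $\innprod x {\apply h x} = \tau$ on its domain. Every choice is made as \enquote{the first element in the enumeration with property $P$}, so the union $h = \bigunion_n h_n$ is defined outright.

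The key geometric input is the following claim. For every $x \in S_{d-1}$, the fibre $F_x \defeq \set{y \in S_{d-1} \given \innprod x y = \tau}$ is infinite. Over $\R$ this fibre is the $(d-2)$-sphere of radius $\sqrt{1-\tau^2}$ centred at $\tau x$ in the hyperplane $\tau x + x^\perp$; because $d \ge 3$ this sphere is at least a circle and has positive radius since $0<\tau<1$. To see that it has infinitely many algebraic points, I will use that $K$ is real closed, so square roots of positive elements of $K$ lie in $K$. Extend $x$ by Gram--Schmidt, applied to the standard basis, to an orthonormal $K$-basis $x, w_1, \dotsc, w_{d-1}$; this stays inside $K$. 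Then for every $t \in \Q$ the point
\[
  y_t = \tau x + \sqrt{1-\tau^2}\left(\frac{1-t^2}{1+t^2} w_1 + \frac{2t}{1+t^2} w_2\right)
\]
lies in $F_x$, and distinct $t$ give distinct points. Since $\innprod x y = \innprod y x$, the same claim applied to $y$ shows that for every $y$ the set of preimage candidates $\set{x \in S_{d-1} \given \innprod x y = \tau}$ is infinite.

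The construction runs in stages, starting from $h_0 = \emptyset$. At an even stage $2n$, let $x$ be the first $s_k$ not in $\operatorname{dom} h_{2n}$. Set $\apply h x$ to be the first element of $F_x$ in the enumeration that is not in the finite set $\operatorname{im} h_{2n}$; it exists because $F_x$ is infinite. At an odd stage, symmetrically take the first $s_k$ not in the image, call it $y$. Assign it a preimage: the first $x$ not in the current finite domain with $\innprod x y = \tau$. Each step preserves injectivity and the constraint. Every $s_k$ eventually enters both the domain and the image, since each stage handles the least missing index. So $h$ is a total bijection $S_{d-1} \to S_{d-1}$ with $\innprod x {\apply h x} = \tau$.

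The step I expect to need the most care is the infinitude claim for the fibre. Specifically, I must check that the orthonormal completion and the scalar $\sqrt{1-\tau^2}$ stay in $K$. Here real-closedness of $K$, and not merely algebraicity, is what is used. The claim also requires $d \ge 3$: for $d = 2$ the fibre has at most two points, and the back-and-forth can get stuck. A second point to spell out is that the enumeration of $K$, and hence of $S_{d-1}$, is genuinely explicit, so the argument uses no choice.
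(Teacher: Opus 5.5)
Your proposal is correct and is essentially the paper's argument. Both build $h$ greedily along a fixed enumeration of $S_{d-1}$ by least-index choices, which works because each fibre $\set{y \in S_{d-1} \given \innprod x y = \tau}$ is infinite for $d \ge 3$. The paper, however, skips your separate \emph{back} stages by using the symmetry of $\innprod x y = \tau$: it pairs the least unused $x_{n+1}$ with the least unused $y_{n+1}$ in its fibre and lets $h$ swap them, so $h$ is an involution and domain and image coincide at every stage. Your rational parametrization of a circle inside the fibre, via Gram--Schmidt over the real closed field $K$, also gives an explicit proof that the fibre is infinite, which the paper only asserts via density of algebraic points.
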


\begin{proof}
  We first fix an enumeration (a bijection) $\map {\operatorname{enum}} {\N_0} {S_{d-1}}$.  For $x \in S_{d-1}$, define $I_x = \set{i \in \N_0 \given \innprod x {\apply {\operatorname{enum}} i} = \tau}$, each of which is countably infinite for $d \ge 3$, since $\set{y \in S_{d-1} \given \innprod x y = \tau}$ is a $(d-2)$-sphere in which the algebraic points are dense.  We now inductively define a family of sets $\tuple{P_n}_{n \in \N_0}$ and a family of maps $\tuple{h_n}_{n \in \N_0} \subseteq (P_n \to P_n)$.
  \begin{itemize}
  \item $P_0 = \emptyset$, and $h_0$ is the empty map.
  \item Suppose that $\map {h_n} {P_n} {P_n}$ is a bijection for some $n \in \N_0$.  Define $x_{n+1} = \apply {\operatorname{enum}} {\min (\N_0 \setminus \apply {\inverse {\operatorname{enum}}} {P_n})}$ and $y_{n+1} = \apply {\operatorname{enum}} {\min (I_{x_{n+1}} \setminus \apply {\inverse {\operatorname{enum}}} {P_n})}$.  We define $h_{n+1}$ by extending $h_n$ to include $\set{x_{n+1},y_{n+1}}$, mapping $x_{n+1} \mapsto y_{n+1}$ and $y_{n+1} \mapsto x_{n+1}$.  Explicitly, this means $P_{n+1} = P_n \union \set{x_{n+1},y_{n+1}}$, and $\map {h_{n+1}} {P_{n+1}} {P_{n+1}}$ is defined by \[z \mapsto \apply {h_{n+1}} z = \begin{cases} \apply {h_n} z & z \in P_n, \\ y_{n+1} & z = x_{n+1}, \\ x_{n+1} & z = y_{n+1}. \end{cases}\]  This is then again a bijection.
  \end{itemize}
  
  With this construction, we have $\bigunion_{n \in \N_0} P_n = S_{d-1}$.  To see this, define $J_n = \min (\N_0 \setminus \apply {\inverse {\operatorname{enum}}} {P_{n-1}})$ for $n \in \N_{>0}$.  Then via construction of $P_n$, $J_{n+1} > J_n$ holds for all $n \in \N_{>0}$.  With $J_1 = 0$, we inductively get $J_n \geq n-1$ and therefore $\apply {\operatorname{enum}} n \in P_{n+1}$ for all $n \in \N_0$.  This then shows $z \in P_{\apply{\inverse {\operatorname{enum}}} {z} + 1}$ for any $z \in S_{d-1}$, concluding $\bigunion_{n \in \N_0} P_n = S_{d-1}$.
  
  We can define $\map{h \defeq \bigunion_{n \in \N_0} h_n} {\bigunion_{n \in \N_0} P_n} {\bigunion_{n \in \N_0} P_n}$ (since each $h_{n+1}$ extends $h_n$).  This is the bijection $\map h {S_{d-1}} {S_{d-1}}$, satisfying $\innprod {\apply h x} x = \tau $ for all $x \in S_{d-1}$.
\end{proof}

\begin{lemma}\label{lem:nochoice}
  Let $\tau \in K$ with $0 < \tau < 1$.  There exists a map $\map \Psi {\set{\tuple{x,z} \in S_{d-1} \times S_{d-1} \given \innprod{x}{z} = \tau}} {\set{M \subseteq S_{d-1} \given \card M = d}}$, such that $\apply \Psi {x,z} \subseteq \set{y \in S_{d-1} \given \innprod y z = \tau}$ is a regular $(d-1)$-simplex with $x \in \apply \Psi {x,z}$.

  Moreover, for any transition function $\transition$, we have $\frac 1 \tau \apply \transition {\apply \Psi {x,z}} = z$, and the edge length of $\apply \Psi {x,z}$ is $\sqrt{\frac{(1-\tau^2)2d}{d-1}}$ for all $\tuple{x,z} \in S_{d-1} \times S_{d-1}$ with $\innprod{x}{z} = \tau$.
\end{lemma}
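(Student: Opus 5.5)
The plan is to build $\apply \Psi {x,z}$ by an explicit, choice-free recipe, and then read off the transition value from \Cref{lem:symcentroid}. Fix $\tuple{x,z}$ with $\innprod x z = \tau$. The target set $\set{y \in S_{d-1} \given \innprod y z = \tau}$ is the intersection of the unit sphere with the affine hyperplane $H_z \defeq \set{y \given \innprod y z = \tau}$. This is a $(d-2)$-sphere with center $\tau z$ and radius $\sqrt{1-\tau^2}$, lying in $\tau z + z^\perp$. The idea is to inscribe in it a regular $(d-1)$-simplex with $x$ as a vertex.

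\textbf{Step 1: a canonical orthonormal basis.} Set $w_1 \defeq (x - \tau z)/\sqrt{1-\tau^2}$, a unit vector in $z^\perp$. Run Gram--Schmidt on the ordered list $(z, w_1, e_1, \dotsc, e_d)$, discarding any vector that becomes $0$ after projection. This yields a uniquely determined orthonormal basis $(z, w_1, w_2, \dotsc, w_{d-1})$ of $\values$. No choice is needed, since the procedure is fully determined by the fixed basis $e_1,\dotsc,e_d$. All coordinates stay in $K$, because $K$ is real closed and hence closed under the field operations and square roots of positive elements; so every $w_k \in \values'$.

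\textbf{Step 2: the simplex.} Fix once and for all explicit unit vectors $a_1,\dotsc,a_d \in K^{d-1}$ with $a_1 = (1,0,\dotsc,0)$ and $\innprod{a_i}{a_j} = -\frac{1}{d-1}$ for $i \ne j$. The standard regular simplex recentred at its centroid, normalized and rotated so that one vertex is $a_1$, can be written down with algebraic coordinates by an explicit formula. Set $v_i \defeq \sum_k (a_i)_k w_k \in z^\perp$ and define
\[
  \apply \Psi {x,z} \defeq \set{\tau z + \sqrt{1-\tau^2}\, v_i}_{i=1}^d .
\]
The following checks are routine:
\begin{itemize}
\item Each point has squared norm $\tau^2 + (1-\tau^2) = 1$ and inner product $\tau$ with $z$, so it lies in the target set.
\item The points are algebraic, so they lie in $S_{d-1}$.
\item The $i=1$ point is $\tau z + \sqrt{1-\tau^2}\,w_1 = x$.
\item Pairwise distances are $\sqrt{1-\tau^2}\,\norm{v_i - v_j} = \sqrt{1-\tau^2}\sqrt{2 + \frac{2}{d-1}} = \sqrt{\frac{(1-\tau^2)2d}{d-1}}$, so the set is a regular $(d-1)$-simplex with the claimed edge length.
\end{itemize}

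\textbf{Step 3: the transition value.} Let $M = \apply \Psi {x,z}$ and show $\values^{\apply{\operatorname{Sym}}M} = \set{\tau z}$. First, every permutation of the vertices of a regular simplex is realized by an isometry of its affine hull fixing the centroid $\tau z$. These isometries extend to $\values$ by acting trivially on the $z$-direction, and their common fixed points form the line $\tau z + \R z$. Second, the reflection across $H_z$ fixes $M$ pointwise, so it also lies in $\apply{\operatorname{Sym}}M$, and it fixes only $H_z$. These two fixed sets meet only in $\tau z$. Hence \Cref{lem:symcentroid} forces $\apply \transition M = \tau z$, that is, $\frac 1 \tau \apply \transition {\apply \Psi {x,z}} = z$.

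The main obstacle is making Step 1 genuinely choice-free and algebraic. An arbitrary inscribed simplex exists trivially, but the paper deliberately avoids the axiom of choice, so the deterministic Gram--Schmidt recipe together with real-closedness of $K$ is what carries the argument. Step 3 also needs care: I must exhibit enough symmetries in the full space $\values$, not just within the affine hull, and the reflection across $H_z$ supplies exactly that.
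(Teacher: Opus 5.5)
Your proof is correct and follows essentially the paper's route: construct, choice-free via Gram--Schmidt over the real closed field $K$, a centred regular simplex in $z^\perp$ with vertex $(x-\tau z)/\sqrt{1-\tau^2}$, scale it by $\sqrt{1-\tau^2}$, translate it by $\tau z$, and read off the transition value from \Cref{lem:symcentroid}. The differences are minor. You seed Gram--Schmidt with $w_1$ where the paper applies a Householder reflection. Your reflection across $H_z$ makes explicit why the fixed-point set in all of $\values$ is exactly $\{\tau z\}$, a point the paper leaves implicit.
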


\begin{proof}
  Let $r = \sqrt{1-\tau^2}$, and define $c = \tau z$ and $u_0 = \frac {x-c}{r}$.  Then $u_0 \in S_{d-1}$ with $\innprod{u_0}{z} = 0$.  Note that $K$ is closed under square roots of positive elements, as required by Gram-Schmidt.  Applying an algebraic Householder reflection mapping a vertex of a fixed algebraic regular $(d-1)$-simplex in $\orthcomp z$ to $u_0$ gives a centered regular simplex $\set{u_i}_{i=0}^{d-1} \subseteq \orthcomp z$ of circumradius $1$, with entries in $K$, containing $u_0$.  We can then define $\apply \Psi {x,z} = \set{c+ru_i}_{i=0}^{d-1}$, which fulfills our desired properties.

  Let $\tuple{x,z} \in \set{\tuple{x,z} \in S_{d-1} \times S_{d-1} \given \innprod{x}{z} = \tau}$.  For every $v \in \apply \Psi {x,z}$, we have $\norm{v-\tau z}^2 = 1+ \tau^2 - 2\tau \innprod v z = 1 - \tau^2$.  As $\tau z$ lies in the affine hyperplane $\set{y \in \values \given \innprod y z = \tau}$ spanned by $\apply \Psi {x,z}$, the circumcenter of $\apply \Psi {x,z}$ is $\tau z$, which is equal to its centroid, and we have circumradius $\sqrt{1-\tau^2}$.

  With the formula $\rho = L \sqrt{\frac{d-1}{2((d-1)+1)}}$ for the circumradius $\rho$ and the edge length $L$ of a $(d-1)$-dimensional regular simplex, any transition function must update to its centroid, which by \Cref{lem:symcentroid} is forced to be $\tau z$.  The edge length of $\apply \Psi {x,z}$ then is $\sqrt{\frac{(1-\tau^2)2d}{d-1}}$.
\end{proof}

\begin{lemma}\label{lem:lower}
  Let $d \ge 3$.  There exists a map $\map \sigma {S_{d-1}} {\set{M \subseteq S_{d-1} \given \card M = d}}$ such that $\apply \sigma x$ is a regular $(d-1)$-simplex of unit edge length with $x \in \apply \sigma x$ for all $x \in S_{d-1}$, and such that $\map g {S_{d-1}} {S_{d-1}}$ defined by $x \mapsto \sqrt{\frac{2(d+1)}{d}}\cntr {\apply \sigma x \union \set 0}$ is bijective.
\end{lemma}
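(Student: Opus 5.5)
We combine \Cref{lem:bijection} and \Cref{lem:nochoice} with a single well-chosen value of $\tau$. The requirement is that $\apply \sigma x$ be a regular $(d-1)$-simplex of unit edge length on the unit sphere. By \Cref{lem:nochoice}, the edge length of $\apply \Psi {x,z}$ is $\sqrt{\frac{(1-\tau^2)2d}{d-1}}$. Setting this equal to $1$ gives $1-\tau^2 = \frac{d-1}{2d}$, that is, $\tau = \sqrt{\frac{d+1}{2d}}$. This $\tau$ is a real algebraic number with $0<\tau<1$ for $d\ge 3$, so $\tau \in K$. Fix this $\tau$ and let $h$ be the bijection from \Cref{lem:bijection}, so $\innprod x {\apply h x} = \tau$ for all $x \in S_{d-1}$. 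Define
\[
\apply \sigma x \defeq \apply \Psi {x, \apply h x}.
\]
Then \Cref{lem:nochoice} gives directly that $\apply \sigma x \subseteq S_{d-1}$ is a regular $(d-1)$-simplex of unit edge length containing $x$, with $d$ points. No choice is needed, since $h$ and $\Psi$ are explicit maps.

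Next we compute the circumcenter of $\apply \sigma x \union \set 0$. All $d$ points of $\apply \sigma x$ lie on the unit sphere and are pairwise at distance $1$, and $0$ has distance $1$ to each of them. Hence $\apply \sigma x \union \set 0$ is a regular $d$-simplex of unit edge length. Its circumcenter equals its centroid, which by symmetry is $\frac{1}{d+1}\sum_{y\in \apply\sigma x} y$. From the proof of \Cref{lem:nochoice}, the barycenter of $\apply \sigma x$ is $\tau \apply h x$, so
\[
\cntr{\apply \sigma x \union \set 0} = \frac{d}{d+1}\,\tau\,\apply h x .
\]
As a check, this vector has norm $\frac{d}{d+1}\sqrt{\frac{d+1}{2d}} = \sqrt{\frac{d}{2(d+1)}}$, which is the circumradius of a unit regular $d$-simplex. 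Alternatively, \Cref{lem:maxnorm} can be applied to the point equidistant from all vertices inside the hull.

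Multiplying by $\sqrt{\frac{2(d+1)}{d}}$ gives
\[
\apply g x = \sqrt{\tfrac{2(d+1)}{d}}\cdot\tfrac{d}{d+1}\sqrt{\tfrac{d+1}{2d}}\;\apply h x = \apply h x .
\]
Thus $g=h$, which is bijective by \Cref{lem:bijection}. The only delicate point is the arithmetic: we must verify that the chosen $\tau$ simultaneously makes the edge length $1$ and makes the scaling factor cancel exactly. The computation above confirms both, so the main obstacle, constructing $h$ without the axiom of choice, is already handled by \Cref{lem:bijection}.
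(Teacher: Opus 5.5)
Your proposal is correct and matches the paper's proof: you choose $\tau = \sqrt{\frac{d+1}{2d}}$, take $h$ from \Cref{lem:bijection}, set $\sigma(x) = \Psi(x, h(x))$ via \Cref{lem:nochoice}, and conclude $g = h$. Your explicit computation of the centroid $\frac{d}{d+1}\tau\, h(x)$ of $\sigma(x) \cup \{0\}$ spells out the step that the paper states in a single line.
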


\begin{proof}
    Let $\map h {S_{d-1}} {S_{d-1}}$ as in \Cref{lem:bijection} with $\tau =  \sqrt{\frac{d+1}{2d}}$.  Now for each $x \in S_{d-1}$, define a regular $(d-1)$-simplex $\apply \sigma x \subseteq \set{y \in S_{d-1} \given \innprod y {\apply h x} = \sqrt{\frac{d+1}{2d}}}$ such that $x \in \apply \sigma x$.  We can construct $\map \sigma {S_{d-1}} {\set{M \subseteq S_{d-1} \given \card M = d}}$ explicitly without the axiom of choice via $x \mapsto \apply \sigma x \defeq \apply \Psi {x,\apply h x}$ from \Cref{lem:nochoice}.  Also from \Cref{lem:nochoice}, for all $x \in S_{d-1}$, we get a unit edge length $d$-simplex $\apply \sigma x \union \set 0$, and therefore $\apply g x = \sqrt{\frac{2(d+1)}{d}}\cntr {\apply \sigma x \union \set 0} = \apply h x$.  Finally, since $h$ is a bijection, $\sigma$ and $g$ fulfill all our desired properties.
\end{proof}

\begin{theorem}\label{thm:lower}
    Let $d \ge 3$.  If we assume $\nodes$ to be allowed to be countably infinite ($\cardinality\nodes = \cardinality \N$), for any transition function $\transition$, there is an instance whose contraction is lower bounded by $\sqrt{\frac{d}{2(d+1)}}$ in every round, implying that the contraction rate $\cratet \transition$ has lower bound $\sqrt{\frac{d}{2(d+1)}}$.  We do not need the axiom of choice.
\end{theorem}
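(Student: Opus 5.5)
The plan is to build a self-similar instance on the countably infinite node set $\nodes = S_{d-1} \union \set 0$, using the map $\sigma$ from \Cref{lem:lower}. The communication graph $G$ is given by $\cinnbhd G 0 = \set 0$ and $\cinnbhd G x = \apply \sigma x \union \set 0$ for $x \in S_{d-1}$. Since $x \in \apply \sigma x$, this is a valid closed in-neighborhood, and $G$ is non-split because every closed in-neighborhood contains the node $0$. The graph is also locally finite: each node has at most $d+1$ in-neighbors. The initial configuration assigns to each node its own label, i.e.\ $I_x = x$ and $I_0 = 0$.

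The first step is to compute one round. Node $0$ only receives $0$, so it stays at $0$ (\Cref{lem:symcentroid} applied to a singleton). For $x \in S_{d-1}$, the received multiset $\apply \sigma x \union \set 0$ is a regular $d$-simplex of unit edge length by \Cref{lem:lower}. By \Cref{lem:symcentroid}, any transition function $\transition$ must map it to its centroid $\cntr{\apply \sigma x \union \set 0} = \kappa \apply g x$, where $\kappa \defeq \sqrt{\frac{d}{2(d+1)}}$. Because $g$ is a bijection of $S_{d-1}$, the new state set is exactly $\kappa S_{d-1} \union \set 0$. The radii follow from \Cref{lem:maxnorm}, or rather from its infinite analogue: $\values'$ is dense, so $S_{d-1}$ contains antipodal pairs (if $y \in S_{d-1}$ then $-y \in S_{d-1}$). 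Hence $\radius{S_{d-1} \union \set 0} = 1$, as $\ball 0 1$ contains the set and the antipodal pair $\pm y$ forces radius at least $1$. Likewise the new radius is $\kappa$, so the one-round contraction equals $\kappa$.

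The second step iterates. After round $r$ the configuration is $\kappa^r$ times the initial one, but the node labelled $x$ now sits at $\kappa^r$ times a relabelled point. Define node $x$'s state after $r$ rounds as $\kappa^r \apply{g^r} x$. For round $r+1$ we choose the graph $G_{r+1}$ with $\cinnbhd{G_{r+1}} x = \set{w \given \apply{g^r} w \in \apply \sigma {\apply{g^r} x}} \union \set 0$. This is well defined because $g^r$ is a bijection; it is exactly $G$ conjugated by $g^r$. The received multiset is then $\kappa^r(\apply \sigma {\apply{g^r}x} \union \set 0)$. By scaling equivariance of $\transition$, the new state is $\kappa^{r+1} \apply{g^{r+1}} x$, so every round contracts by exactly $\kappa$. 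Each $G_{r+1}$ is again non-split through node $0$, and the product in the definition of $\crate J$ equals $\kappa^k$, which gives $\cratet \transition \ge \kappa$. No choice is used, since $h$, $\Psi$, $\sigma$ and $g$ were all constructed explicitly in \Cref{lem:bijection,lem:nochoice,lem:lower}.

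The main obstacle is bookkeeping rather than geometry. The model as formalized takes $\nodes$ finite, and \Cref{lem:maxnorm} is stated for finite sets, so I would check two points explicitly:
\begin{itemize}
\item The radius computation for the infinite set $S_{d-1}$ uses only the antipodal-pair argument, which needs $-y \in S_{d-1}$; this holds because $\values'$ is a $K$-subspace.
\item The contraction definition still makes sense for bounded infinite sets, which it does, since it only involves smallest enclosing balls of bounded sets.
\end{itemize}
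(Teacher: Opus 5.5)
Your proposal is correct and follows the paper's proof essentially verbatim: node set $S_{d-1}\cup\{0\}$, initial states $I_v=v$, round-$r$ graphs obtained by conjugating $\sigma$ with $g^{r-1}$, and the inductive invariant that node $x$ holds $\kappa^r g^r(x)$ after round $r$. Your antipodal-pair argument for why $S_{d-1}\cup\{0\}$ has radius $1$ (using closure of $S_{d-1}$ under negation, not density) simply spells out what the paper leaves as ``via construction''.
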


\begin{proof}
    Let $\nodes = S_{d-1} \union \set 0$.  Consider $\map \sigma {S_{d-1}} {\set{M \subseteq S_{d-1} \given \card M = d}}$ and $\map g {S_{d-1}} {S_{d-1}}$ as in \Cref{lem:lower}.  We define for each $r \in \N_{>0}$ a communication graph $G_r$ on $\nodes$ via $\cinnbhd {G_r} v = \apply {g^{-(r-1)}} {\apply \sigma {\apply {g^{r-1}} v}} \union \set 0$ for $v \in S_{d-1}$ and $\cinnbhd {G_r} 0 = \set 0$, denoting $\Gamma = \tuple{G_r}_{r \in \N_{>0}}$.  These graphs are well-defined since $g$ is a bijection, and non-split since every in-neighborhood contains $0$.  The in-neighborhoods are closed since $x \in \apply \sigma x$ for all $x \in S_{d-1}$ implies $v = \apply {g^{-(r-1)}} {\apply {g^{r-1}} v} \in \cinnbhd {G_r} v$.  Finally, we define initial states $I = \tuple{I_v}_{v \in \nodes}$ defined by $I_v = v$ for $v \in \nodes = S_{d-1} \union \set 0$.  Then, for the instance $J = \tuple{\transition,I,\Gamma}$ for any transition function $\transition$, the first round will have a node $v \in S_{d-1}$ update its state $\state J 0 v = v \in S_{d-1}$ to $\cntr {\apply \sigma v \union \set 0} = \sqrt{\frac{d}{2(d+1)}} \apply g v$, as $\apply \sigma v \union \set 0$ is a regular $d$-simplex (\Cref{lem:symcentroid}).  Since $g$ is a bijection, the set of all states after round one is then $\sqrt{\frac{d}{2(d+1)}} S_{d-1}$.  Inductively, entering round $r$, the state of each node $w \in S_{d-1}$ is $\state J {r-1} w = \sqrt{\frac{d}{2(d+1)}}^{r-1} \apply {g^{r-1}} w$, so node $v \in S_{d-1}$ receives the multiset $\set{\state J {r-1} w \given w \in \cinnbhd {G_r} v} = \sqrt{\frac{d}{2(d+1)}}^{r-1} (\apply \sigma {\apply {g^{r-1}} v} \union \set 0)$, since $g^{r-1}$ is a bijection.  This is again a regular $d$-simplex, scaled by $\sqrt{\frac{d}{2(d+1)}}^{r-1}$, so node $v$ updates its state to its centroid $\sqrt{\frac{d}{2(d+1)}}^{r-1} \cntr {\apply \sigma {\apply {g^{r-1}} v} \union \set 0} = \sqrt{\frac{d}{2(d+1)}}^r \apply {g^r} v$, and we get $\set{\state J r v}_{v \in \nodes} = \sqrt{\frac{d}{2(d+1)}}^r S_{d-1}\union \set 0$.  We therefore have $\roundcontraction J r = \radius {\sqrt{\frac{d}{2(d+1)}}^r S_{d-1}} / \radius {\sqrt{\frac{d}{2(d+1)}}^{r-1}S_{d-1}} = \sqrt{\frac{d}{2(d+1)}}$ for all $r \in \N_{>0}$ via construction.
\end{proof}

The above construction does not work on finitely many points.  However, taking the instance in \Cref{thm:lower}, restricting it to a sufficiently large finite subset of the nodes, and simplifying the remaining in-neighborhoods gives a contraction lower bound for any fixed number of rounds.

\begin{corollary}
    Let $d \ge 3$.  Assuming a model with finitely many nodes in a network, let $\transition$ be any transition function.  For any number of rounds $R \in \N_{>0}$, there is an instance whose contraction is lower bounded by $\sqrt{\frac{d}{2(d+1)}}$ in every round $1 \le r \le R$.
\end{corollary}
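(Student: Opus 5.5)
The plan is to truncate the infinite instance of \Cref{thm:lower} to the finite backward-dependency cone of a suitable finite set of nodes over $R$ rounds, and to check that the rounds $1,\dots,R$ run exactly as in the infinite instance on that cone. Let $\kappa = \sqrt{\frac{d}{2(d+1)}}$, and take $J$, $\sigma$, $g$, $G_r$ as in the proof of \Cref{thm:lower}.

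\textbf{Step 1 (the finite sets).} Fix a finite set $F_R \subseteq S_{d-1}$ of nodes for which we want the final states to span a ball of radius $\kappa^R$. By \Cref{lem:maxnorm}, it suffices that $F_R$ consists of points on $S_{d-1}$ whose convex hull contains $0$; for instance, take $\pm e_1$, which are algebraic. Define backwards, for $r = R, R-1, \dots, 1$,
\[
F_{r-1} \defeq \bigcup_{v \in F_r} \bigl(\cinnbhd{G_r}{v} \setminus \set 0\bigr).
\]
Each in-neighborhood is finite, so every $F_r$ is finite, and $F_{r-1} \supseteq F_r$ because the neighborhoods are closed. The finite node set is $\nodes' \defeq F_0 \union \set 0$.

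\textbf{Step 2 (the truncated graphs).} In round $r$, each $v \in F_r$ keeps its in-neighborhood $\cinnbhd{G_r}{v}$, which lies in $\nodes'$. Each remaining node $v \in \nodes' \setminus (F_r \union \set 0)$ is given the in-neighborhood $\set{v,0}$, and node $0$ keeps $\set 0$. Every in-neighborhood contains $0$, so each graph is non-split, and $0$ stays at state $0$ forever. By induction on $r$, the nodes in $F_r$ have exactly the states they have in the infinite instance, namely $\kappa^r g^r(v)$. The induction works because a node $v \in F_r$ only reads from nodes in $F_{r-1} \union \set 0$, which are correct by hypothesis. Each such $v$ therefore receives a scaled regular $d$-simplex, and \Cref{lem:symcentroid} forces it to output the centroid. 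The other nodes are only required to remain inside the current ball: a node with view $\set{v,0}$ outputs a point on the segment between $0$ and $v$, by \Cref{lem:symcentroid} applied to the reflection symmetry swapping $v$ and $0$ together with the symmetries fixing that segment. Hence all states after round $r$ lie in $\kappa^{r-1}\mathcal{S}^{d-1}$ (the closed ball of radius $\kappa^{r-1}$), giving $\radius{M_r} \le \kappa^{r-1}$. By induction every state stays within norm $\kappa^{r-1}$; more precisely, the non-$F$ nodes only shrink.

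\textbf{Step 3 (the main obstacle: the per-round contraction).} We need $\radius{M_r}/\radius{M_{r-1}} \ge \kappa$ for every single round $r \le R$, not just overall. The upper bound $\radius{M_{r-1}} \le \kappa^{r-1}$ holds by Step 2. For the lower bound $\radius{M_r} \ge \kappa^r$, it suffices that the states of $F_r$ together with $0$ have a convex hull containing $0$ and all have norm $\kappa^r$, so that \Cref{lem:maxnorm} gives radius exactly $\kappa^r$. The states of $F_r$ all lie on the sphere of radius $\kappa^r$, and $\radius{M_r}$ is at least half the diameter of the state set. So a sufficient condition is that $g^r(F_r)$ contains an antipodal pair for every $r \le R$.

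To guarantee this, enlarge the sets before truncating. Starting from the top, add to $F_R$ the antipodal pair $\pm e_1$. Then, for each $r < R$, additionally place $g^{-r}(\pm e_1)$ into $F_r$ (these are algebraic points because $g$ is a bijection of $S_{d-1}$), and then recompute the backward cones. Everything remains finite.

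With these sets, the states of $\nodes'$ after round $r$ include $\pm\kappa^r e_1$ and have norm at most $\kappa^r$. Any state of norm exceeding $\kappa^r$ would belong to a node outside $F_r$ that still holds its value from round $r-1$; to exclude this, observe that nodes on segments toward $0$ get contracted anyway. If that argument fails, I would instead give every non-$F_r$ node the view $\set{v,0}$ together with its antipode, a symmetric set whose forced output is $0$. Either way $\radius{M_r} = \kappa^r$ by \Cref{lem:maxnorm}, and the ratio $\kappa^r/\kappa^{r-1} = \kappa$ holds for each $1 \le r \le R$.
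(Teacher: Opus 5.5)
Your overall plan matches the paper's. You truncate the instance of \Cref{thm:lower} to the finite backward cone of a few nodes. In each round you add a pair of nodes with antipodal states; your $g^{-r}(\pm e_1)$ plays the role of the paper's active sets $A_r$. All other nodes get the in-neighborhood $\set{v,0}$. The cone nodes are handled correctly, but the bound on the padded nodes has a genuine gap.

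In Step~2 you only show that a padded node outputs \emph{some} point of the segment from its state to $0$. That yields only that states after round $r$ have norm at most $\kappa^{r-1}$. Step~3 then cites this for $\radius{M_{r-1}} \le \kappa^{r-1}$, but it actually gives only $\radius{M_{r-1}} \le \kappa^{r-2}$, hence a per-round ratio of only $\kappa^2$. The issue is real: a node of $F_{r-1} \setminus F_r$ holds a state of norm exactly $\kappa^{r-1}$ after round $r-1$ and is padded in round $r$. You must therefore show its new state has norm at most $\kappa^r$. Your remark that such nodes ``get contracted anyway'' is exactly what needs proof. The fallback also does not work: an in-neighborhood consists of nodes of the finite network, and no node is guaranteed to hold the antipodal state. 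Cone nodes hold $\kappa^{r-1} g^{r-1}(w)$, and $g$ need not commute with $x \mapsto -x$.

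The missing step is quantitative. Consider the reflection in the bisecting hyperplane of $0$ and $q$, together with the orthogonal linear maps fixing $q$. These symmetries of $\set{q,0}$ have the single common fixed point $q/2$, so by \Cref{lem:symcentroid} a padded node outputs exactly half its previous state. By induction every state after round $r-1$ has norm at most $\kappa^{r-1}$. A padded node therefore has norm at most $\kappa^{r-1}/2 \le \kappa^r$ after round $r$, because $\kappa = \sqrt{\frac{d}{2(d+1)}} \ge \sqrt{3/8} > 1/2$. This is the paper's ``since $\kappa \ge 1/2$'' step. Together with your antipodal pairs it gives $\radius{M_r} = \kappa^r$ for $0 \le r \le R$, and hence ratio exactly $\kappa$.

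Also fix one slip. The nodes $\pm e_1$ themselves end round $R$ at $\kappa^R g^R(\pm e_1)$, which need not be antipodal. So $F_R$ must contain $g^{-R}(\pm e_1)$, consistent with your rule for $r<R$. The cones should then be $F_{r-1} = \bigcup_{v \in F_r} (\cinnbhd{G_r}{v} \setminus \set 0) \cup g^{-(r-1)}(\set{\pm e_1})$, mirroring the paper's $Z_{r-1} = \cinnbhd{G_r}{Z_r} \cup A_{r-1}$.
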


\begin{proof}
    Let $\transition$ be any transition function.  Consider an instance $J = \tuple{\transition,I,\Gamma}$ with nodes $\nodes$ as in \Cref{thm:lower}.  With \Cref{lem:activepoints}, choose a subset $A_R \subseteq \nodes$ such that we have $\radius {\set{\state J R v}_{v \in \nodes}} = \radius {\set{\state J R v}_{v \in A_R}}$.  Intuitively, we construct a set $Z$ of nodes obtained by successively applying the closed in-neighborhood on each node in rounds $1$ to $R$, but backwards, also adding sufficiently many active points.  We recursively define, starting at $r=R$ and going backwards to $r=1$, the nodes $Z_R = A_R$ and $Z_{r-1} = \cinnbhd {G_{r}} {Z_r} \union A_{r-1}$, where $A_{r-1} \subseteq \nodes$ is again chosen via \Cref{lem:activepoints} such that $\radius {\tuple{\state J {r-1} v}_{v \in \nodes}} = \radius {\tuple{\state J {r-1} v}_{v \in A_{r-1}}}$.  Finally, we define $Z = Z_0$, and set $\kappa \defeq \sqrt{\frac{d}{2(d+1)}}$.  We define the instance $J' = \tuple{\transition, \restrict I Z, \Gamma'}$ on the nodes $Z$, where $\Gamma' = \tuple{G'_r}_{r \in \N_{>0}}$ is given by $\cinnbhd {G'_r} v = \cinnbhd {G_r} v$ for $v \in Z_r$ and $\cinnbhd {G'_r} v = \set{v, 0}$ for $v \in Z \setminus Z_r$ if $1 \le r \le R$, and any non-split graph for $r > R$.  Note that $\cinnbhd {G_r} v \subseteq Z_{r-1} \subseteq Z$ for all $v \in Z_r$, so $G'_r$ is well-defined on $Z$, and it is non-split since $0$ lies in every in-neighborhood.  In round $r$, a node $v \in Z_r$ then receives the same states as in $J$ and updates to $\state {J'} r v = \kappa^r \apply {g^r} v$.  A node $w \in Z \setminus Z_r$ receives $\set{\state {J'} {r-1} w, 0}$ and must update to the midpoint of these two points (\Cref{lem:symcentroid}), halving its state.  Since $\kappa \ge 1/2$, the norm of its state stays below $\kappa^r$.  Hence the states of $J'$ after round $r$ have radius $\kappa^r$, realized by the nodes in $A_r \subseteq Z_r$.  We therefore have $\roundcontraction {J'} r = \kappa$ in every round $1 \le r \le R$.
\end{proof}

The required number of nodes grows exponentially in $R$.  Specifically, $\cardinality Z \le 2(d+1)^{R+1}$.

\subsection{One-round lower bound}\label{sec:LBOneRound}

In this section, we show a one-round lower bound that approaches $\sqrt{\frac 2 3}$ for $d \to \infty$.  The construction becomes a bit more involved: We split our vector space into two orthogonal subspaces $\values = X \dplus F$ of dimensions $\ell-1$ and $m-1$, form $\ell$ sets $V_i = \Delta_{m} \times \Delta_{\ell-2}$ that are a Cartesian product of a regular $m$-simplex and a regular $(\ell-2)$-simplex, and place these in our vector space such that any two of them intersect in a single point, with all of these intersection points lying on an $(\ell-2)$-sphere in $X$.

\begin{lemma}\label{lem:last}
    Let $\ell,m \in \N_{>0}$ such that $\ell \geq 3$ and $m \geq 1$ with $\ell+m = d+2$.  There exist sets $V_i \subseteq \values$ for $i \in \set{1,\dotsc,\ell}$ such that:
    \begin{itemize}
        \item $V_i$ has a centroid $c_i$ for $i \in \set{1,\dotsc,\ell}$.
        \item The sets $\set{V_i}_{i=1}^\ell$ are pairwise intersecting.
        \item For $M_d' = \bigunion_{i=1}^\ell V_i$ and $M_d = \set{c_i}_{i=1}^\ell$, we have \[\contraction {M_d'} {M_d} = \radius {M_d} / \radius {M_d'} = \lambda = \sqrt{\frac{2}{\gamma(4-\gamma s)}} \mathcomma\] where $\gamma = 1 + \frac 1 m$ and $s = 1 - \frac 1 {\ell-1}$.
    \end{itemize}
    If for every $d \geq 3$ we choose $\ell$ and $m$ dependent on $d$ such that $\ell \xrightarrow{d \to \infty} \infty$ and $m \xrightarrow{d \to \infty} \infty$, we have $\contraction {M_d'} {M_d} \xrightarrow{d\to \infty}\sqrt{\frac 2 3}$.
\end{lemma}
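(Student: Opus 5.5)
The plan is to build the $V_i$ explicitly as Cartesian products of two regular simplices. The pairwise intersections will be planted at the midpoints of edges of a regular simplex in $X$. The centroids then follow from \Cref{lem:symcentroid}, and both radii from \Cref{lem:maxnorm}.

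\textbf{Construction.} Write $\values = X \dplus F$ with $\dim X = \ell-1$ and $\dim F = m-1$. Fix a regular $(\ell-1)$-simplex $w_1,\dotsc,w_\ell \in X$ centred at $0$ with $\norm{w_i}=1$, and set $p_{ij} = \alpha (w_i+w_j)$ for $i \neq j$, with a scale $\alpha>0$ to be fixed later. For fixed $i$, the $\ell-1$ points $\set{p_{ij}}_{j\neq i}$ form a regular $(\ell-2)$-simplex $D_i$. Its centroid is a multiple of $w_i$, since $\sum_{j\ne i}w_j=-w_i$, and it spans the hyperplane of $X$ orthogonal to $w_i$ through that centroid.

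Next, in the $m$-dimensional space $E_i \defeq \R w_i \dplus F$, choose a regular $m$-simplex $T_i$ with one vertex at $0$. Its centroid should lie on the ray $\R_{>0} w_i$; this is possible because a regular simplex is symmetric about the line joining a vertex to its centroid. The size of $T_i$ is given by a second parameter $\beta$. Define
\[
  V_i \defeq D_i + T_i = \set{a+b \given a \in D_i,\ b \in T_i},
\]
which is an affine copy of $\Delta_{\ell-2} \times \Delta_m$, because $D_i - D_i \subseteq \orthcomp{w_i}\cap X$ is orthogonal to $E_i$. Since $0 \in T_i$, we get $D_i \subseteq V_i$, so $p_{ij} \in V_i \cap V_j$ and the sets are pairwise intersecting.

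\textbf{Centroids.} The symmetry group of $V_i$ contains $S_{\ell-1}\times S_{m+1}$, acting by permuting the vertices of each factor isometrically. Its only fixed point is the sum of the two factor centroids. By \Cref{lem:symcentroid}, $c_i = \mathrm{cen}(D_i)+\mathrm{cen}(T_i)$, which is a positive multiple $\mu w_i$ of $w_i$. Hence $M_d = \set{\mu w_i}$ is a centred regular simplex, and $\radius{M_d}=\mu$ by \Cref{lem:maxnorm}.

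\textbf{Radius of $M_d'$.} Every point of $V_i$ has the form $a+b$ with $a\in D_i$ and $b\in T_i$. Its squared norm is $\norm{a}^2+2\innprod{a}{b}+\norm{b}^2$, where $\innprod{a}{b}$ depends only on the $w_i$-components. The maximal-norm points should be the sums of a vertex of $D_i$ with the ``far'' vertices of $T_i$, namely those other than $0$. These lie at a common distance $R'$ from $0$. Their union over $i$ is invariant under the symmetry group of the $w$-simplex combined with permutations of $F$-coordinates, so $0$ lies in their convex hull and \Cref{lem:maxnorm} gives $\radius{M_d'}=R'$. It remains to check that all other product vertices (those using the vertex $0$ of $T_i$, i.e.\ the $p_{ij}$) have norm at most $R'$. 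The free ratio $\beta/\alpha$ should be chosen to make the ratio $\mu/R'$ extremal subject to this constraint.

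\textbf{Main obstacle.} The hard step is the explicit computation. One has to express $\mu$ and $R'$ through $\alpha,\beta$ using $\innprod{w_i}{w_j}=-\frac1{\ell-1}$, the circumradius $\sqrt{m/(2(m+1))}$ times edge length of $T_i$, and the height of its centroid along $w_i$. This is where the quantities $\gamma = 1+\frac1m$ and $s = 1-\frac1{\ell-1}$ will enter. Then one must optimise (or fix) $\beta/\alpha$ to reach exactly $\lambda^2 = \frac{2}{\gamma(4-\gamma s)}$. I would first try the choice where the $p_{ij}$ also attain norm $R'$, since equality cases usually define the extremal configuration. The limit is immediate once the formula holds: $\gamma,s\to1$ gives $\lambda\to\sqrt{2/3}$.
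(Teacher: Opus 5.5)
Your construction is the paper's construction, reparametrized. With the paper's centred simplices $A_i$ (which contains $v_0^i=-\hat u_i$) and $B_i$, your $T_i=\beta(A_i+\hat u_i)$ and $D_i=\alpha s\,\hat u_i+\alpha B_i$ give $D_i+T_i=c_i+\beta A_i+\alpha B_i$ with $c_i=(\alpha s+\beta)\hat u_i$. Your arguments for pairwise intersection at $\alpha(w_i+w_j)$, for the centroid, and for both radii via \Cref{lem:maxnorm} are sound. (For ``$0$ lies in the convex hull of the far points'', either let all $T_i$ share one simplex in $F$, or just note that the far points of each $V_i$ average to the same multiple of $w_i$.)

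The genuine gap is the step you defer. The lemma asserts that some member of this one-parameter family attains exactly $\lambda$, but you neither fix $\beta/\alpha$ nor verify the value. Moreover, the heuristic you propose for fixing it fails.

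Take $\beta$ to be the circumradius of $T_i$, so $\mu=\alpha s+\beta$. The far vertices $b$ of $T_i$ satisfy $\langle b,w_i\rangle=\gamma\beta$ and $\lVert b\rVert^2=2\gamma\beta^2$. The $D_i$-vertices satisfy $\lVert p_{ij}\rVert^2=2s\alpha^2$ and $\langle p_{ij},w_i\rangle=\alpha s$. Hence
\[
R'^2=2s\alpha^2+2\gamma s\alpha\beta+2\gamma\beta^2,\qquad R'^2-\lVert p_{ij}\rVert^2=2\gamma\beta\mu .
\]
So the constraint $\lVert p_{ij}\rVert\le R'$ is strict for every nondegenerate $T_i$. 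Demanding equality, as you suggest, forces $\beta=0$. That collapses $V_i$ to $D_i$ and gives only $\mu/R'=\sqrt{s/2}<\lambda$.

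The extremum is instead an interior stationary point. Put $r=\beta/\alpha$. The derivative of $(s+r)^2/(s+\gamma sr+\gamma r^2)$ has the sign of $2-\gamma s-\gamma r$, so the maximum is at $r=\frac{2}{\gamma}-s$, which is positive because $\gamma s<2$. There $s+r=\frac2\gamma$ and $s+\gamma sr+\gamma r^2=\frac4\gamma-s$, giving $\mu^2/R'^2=\frac{2}{\gamma(4-\gamma s)}=\lambda^2$.

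Normalizing $R'=1$, this is exactly the paper's choice $\alpha=\frac{\gamma\lambda}{2}$, $\beta=\lambda-\alpha s$. The paper then checks that all far points have norm $1$ and that $\lVert p_{ij}\rVert^2=\frac{\gamma s}{4-\gamma s}<1$. Without this parameter choice and computation, your proposal does not establish the stated value of the contraction. Your limit argument, by contrast, is fine once the formula is in hand.
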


\begin{proof}
    Let $\ell,m \in \N_{>0}$ such that $\ell \geq 3$ and $m \geq 1$ with $\ell+m = d+2$, and set $t = -\frac 1 {\ell-1}$, $s = 1 + t$, $\gamma = 1 + \frac 1 m$, $\lambda = \sqrt{\frac 2 {\gamma(4 - \gamma s)}}$, $\alpha = \frac{\gamma \lambda}{2}$ and $\beta = \lambda - \alpha s$.  Note that $\gamma s < 2$ and $\beta > 0$.

    We now split $\values = X \dplus F$ orthogonally with $\dim X = \ell-1$ and $\dim F = m-1$.  Let $\set{\hat u_i}_{i=1}^\ell \subseteq X$ be a set of unit length vectors that form a regular $(\ell-1)$-simplex with the origin as its centroid.  And similarly, let $\set{f_l}_{l=1}^m \subseteq F$ be a regular $(m-1)$-simplex with the origin as its centroid with $\norm{f_l}^2 = 1 - \frac 1 {m^2}$ (if $m=1$ then $f_1 = 0$).
    
    We now define unit length vectors $v_0^i = - \hat u_i$ and $v_l^i = \frac 1 m \hat u_i + f_l$, and vectors $w_j^i = \hat u_j - t \hat u_i$ of length $\sqrt{1-t^2}$ for $1\le l \le m$, $1\leq i,j \leq \ell$ and $i\neq j$.  For each $i$, the set $A_i = \set{v_l^i}_{l=0}^m$ forms a regular $m$-simplex in $W_i = \spann \set{\hat u_i} \dplus F$ with the origin as its centroid.  Similarly, $B_i = \set{w_j^i \given 1 \le j \le \ell \commaand j \neq i}$ forms a regular $(\ell-2)$-simplex in $\orthcomp{W_i}$ with the origin as its centroid.
    
    Finally, for $1\leq i \leq \ell$, we define $c_i = \lambda\hat u_i$, and define the sets $V_i = c_i + \beta A_i + \alpha B_i$.  Note that $V_i$ has a centroid:  $A_i \subseteq W_i$ and $B_i \subseteq \orthcomp{W_i}$ lie in orthogonal subspaces and each has the origin as its centroid, so their symmetries combine to symmetries of $V_i$ fixing only $c_i$.
    
    These are pairwise intersecting, as \[c_i + \beta v_0^i + \alpha w_j^i = (\lambda - \beta - \alpha t)\hat u_i + \alpha \hat u_j = \gamma \frac{c_i + c_j}{2} = c_i + \beta v_0^j + \alpha w_i^j\] for all $i\neq j$.

    Now let $p_{i,j,l} = c_i + \beta v_l^i + \alpha w_j^i$ for distinct $1\leq i,j\leq \ell$ and $0 \le l \le m$.  Using the fact that $A_i\subseteq W_i$ and $B_i \subseteq \orthcomp{W_i}$ are orthogonal, and that $\hat u_i \in W_i$, for $l \neq 0$, we can calculate
    \begin{align*}
        \norm{p_{i,j,l}}^2
        &= \norm{c_i + \beta v_l^i}^2 + \alpha^2\norm{w^i_j}^2 \\
        &= \lambda^2 + 2\lambda \beta \innprod{\hat u_i}{v_l^i} + \beta^2 + \alpha^2 (1-t^2)  \\
        &= \lambda^2 + \frac{2\lambda \beta}{m} + \beta^2 + \alpha^2 (1-t^2) \mathperiod
    \end{align*}
    Substituting $\beta=\lambda(1-\frac{\gamma s}2)$, $\alpha=\frac{\gamma\lambda}2$, $1-t^2=s(2-s)$ and $\frac 1 m = \gamma-1$:
    \begin{align*}
        \frac{\norm{p_{i,j,l}}^2}{\lambda^2}
         &=1+\frac2m(1-\frac{\gamma s}2)+(1-\frac{\gamma s}2)^2+\frac{\gamma^2}4\,s(2-s)\\
         &=1+\frac2m(1-\frac{\gamma s}2)+(1-\gamma s+\frac{\gamma^2 s^2}4)+(\frac{\gamma^2s}2-\frac{\gamma^2s^2}4)\\
         &=1+(1-\gamma s+\frac{\gamma^2s}2)+(2\gamma-2-\gamma^2s+\gamma s) \\
         &=2\gamma-\frac{\gamma^2s}2=\frac{\gamma(4-\gamma s)}2 \mathcomma
    \end{align*}
    therefore $\norm{p_{i,j,l}} = 1$.  Also note that $\norm{p_{i,j,0}}^2 = \norm{\alpha(\hat u_i + \hat u_j)}^2 = 2\alpha^2s=\frac{\gamma s}{4-\gamma s} < 1$, showing that all points lie in the closed unit ball centered at the origin.  The points $p_{i,j,l}$ with $l \neq 0$ of unit norm average to the origin, so \Cref{lem:maxnorm} shows $\radius {M_d'} = 1$.
    
    With $\norm{c_i} = \lambda$, since they form a regular simplex, we see for $M_d = \set{c_i}_{i=1}^\ell$ that $\radius {M_d} = \lambda$, finally implying $\contraction{M_d'}{M_d} = \lambda$.  Finally, choose $\ell = \ceil{\frac d 2} + 2$ and $m = \floor{\frac d 2}$, so that $\ell + m = d + 2$.  With $a \defeq \frac 1 m$ and $b \defeq \frac 1 {\ell-1}$, we have $b \le a \le \frac 2 {d-1}$, and expanding the denominator gives $\gamma(4 - \gamma s) = 3 + 2a + b + 2ab - a^2 s \le 3 + 5a \le 3 + \frac{10}{d-1}$.  Hence $\lambda^2 \ge \frac{2}{3 + 10/(d-1)} = \frac{2d-2}{3d+7} \xrightarrow{d \to \infty} \frac 2 3$.
\end{proof}

\begin{theorem}
    For every $d \ge 3$ and any transition function, there is an instance in the non-split network model whose one-round contraction is lower bounded by $\sqrt{\frac{2d-2}{3d+7}} \xrightarrow{d \to \infty} \sqrt{\frac 2 3}$.
\end{theorem}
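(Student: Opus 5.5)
The theorem follows from \Cref{lem:last} once its sets are realized as the closed in-neighborhoods of a non-split graph. I would fix $d \ge 3$ and take $\ell = \ceil{\frac d 2} + 2$ and $m = \floor{\frac d 2}$, as at the end of the proof of \Cref{lem:last}. This choice gives $\ell \ge 3$, $m \ge 1$ and $\ell + m = d+2$, so the lemma provides sets $V_1,\dotsc,V_\ell \subseteq \values$. These sets pairwise intersect, each $V_i$ has a centroid $c_i$, and they satisfy $\radius{\set{c_i}_i}/\radius{\bigunion_i V_i} = \lambda$ with $\lambda^2 \ge \frac{2d-2}{3d+7}$.

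The nodes and initial states are built from the points of the $V_i$. Following the construction in the tightness proof of \Cref{sec:algAnalysisTightness}, I would create one node for each point $p_{i,j,l}$ with $i \neq j$ and $0 \le l \le m$ and give it that point as its state. The intersection points $p_{i,j,0} = p_{j,i,0}$ should be identified, so that a single node carries $\gamma\frac{c_i+c_j}{2}$; assigning it to the unordered pair $\set{i,j}$ suffices. Every node then lies in one or two of the index sets $N_i \defeq \set{\text{nodes whose point lies in } V_i}$.

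Next I would define the graph $G$. A node whose point lies in only one $V_i$ gets $\cinnbhd G v = N_i$. The shared node of $\set{i,j}$ gets either $N_i$ or $N_j$, say the one with the smaller index. Both choices contain the node itself, so the neighborhoods are closed. Non-splitness follows from the second bullet of \Cref{lem:last}: any two neighborhoods $N_i$ and $N_j$ share the node of $\set{i,j}$, and if $i=j$ they coincide. The received multiset of each node in $N_i$ is exactly $V_i$, viewed as a set of distinct points, so \Cref{lem:symcentroid} together with the centroid property forces every transition function to output $c_i$.

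It remains to compute the contraction. The initial state set is $M_d' = \bigunion_i V_i$, which has radius $1$ by \Cref{lem:last}. Every index $i$ is realized by some node, since for instance $p_{i,j,1}$ lies only in $V_i$. Hence the set of new states is exactly $\set{c_i}_{i=1}^\ell = M_d$, and the contraction in this round is $\lambda \ge \sqrt{\frac{2d-2}{3d+7}}$.

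The main point needing care is the multiset subtlety. If two nodes of $V_i$ had the same position, the received multiset could carry multiplicities that break the symmetry of $V_i$. I would check that the $p_{i,j,l}$ within one $V_i$ are pairwise distinct: $A_i$ and $B_i$ are simplices in orthogonal subspaces and $\alpha,\beta>0$, so the map $(l,j) \mapsto \beta v^i_l + \alpha w^i_j$ is injective. Each shared point also appears only once, which holds because of the identification above. With these checks, every received multiset is the plain set $V_i$ and the centroid argument applies.
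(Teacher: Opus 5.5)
Your proposal is correct and matches the paper's proof: the same choice $\ell=\lceil d/2\rceil+2$, $m=\lfloor d/2\rfloor$, one node per point of $\bigcup_i V_i$ with the intersection point $p_{i,j,0}=p_{j,i,0}$ carried by a single node $\{i,j\}$, closed in-neighborhoods $N_i$, non-splitness via that shared node, and \Cref{lem:symcentroid} forcing the output $c_i$. Assigning the shared node to $N_{\min\{i,j\}}$ rather than the paper's $N_{\max\{i,j\}}$ makes no difference, and your explicit check that the points within each $V_i$ are pairwise distinct, so that the received multiset really is the symmetric set $V_i$, spells out a step the paper leaves implicit.
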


\begin{proof}
        Let $\nodes = \set{\tuple{i,j,l} \given 1 \leq i,j \leq \ell \commaand i \neq j \commaand 1 \leq l \leq m} \union \set{\set{i,j} \given 1 \leq i < j \leq \ell}$.  We define the states $q'_{i,j,l} = p_{i,j,l} = c_i + \beta v_l^i + \alpha w^i_j$ and $q'_{\set{i,j}} = p_{i,j,0} = \alpha(\hat u_i + \hat u_j)$, and, writing $N_i = \set{\tuple{i,j,l} \given j \neq i \commaand 1 \leq l \leq m} \union \set{\set{i,j} \given j \neq i}$, the graph $G$ on $\nodes$ via $\cinnbhd G {\tuple{i,j,l}} = N_i$ for all $i \neq j$ and $\cinnbhd G {\set{i,j}} = N_{\max\set{i,j}}$ for all $i < j$.

        The closed in-neighborhoods contain their own nodes, and $G$ is non-split, since $\set{i,j} \in N_i \intersect N_j$ for all $i \neq j$.  Since $p_{i,j,0} = c_i + \beta v_0^i + \alpha w_j^i = \alpha(\hat u_i + \hat u_j)$, the multiset of states of the nodes in $N_i$ is exactly $V_i$.  Hence every node of group $i$ receives $V_i$ and, by \Cref{lem:symcentroid}, updates to its centroid $c_i$.  The multiset of all states the nodes start with is $M_d'$ with $\radius{M_d'} = 1$, and the multiset of all updated states is $\set{c_i}_{i=1}^\ell = M_d$ with $\radius{M_d} = \lambda \ge \sqrt{\frac{2d-2}{3d+7}} \xrightarrow{d \to \infty} \sqrt{\frac 2 3}$ by \Cref{lem:last}.
\end{proof}

\section{Conclusion}
This paper focused on improving contraction bounds for asymptotic agreement in dynamic networks. We analyzed the \bmp algorithm in the rooted tree and the non-split network models. For the non-split network model, we could show a contraction rate of $\sqrt{2d/(3d+1)}$ for our algorithm, which improves over the previously best-known bound of $\sqrt{7/8}$ via the MidExtremes algorithm. Note that our algorithm can be used to solve approximate agreement. In approximate agreement, the convergence condition is relaxed to $\varepsilon$-agreement: the nodes should terminate with values that lie inside a smallest enclosing ball of radius $\varepsilon$. To solve approximate agreement, it would be necessary to let each node locally keep a round counter. Additionally, the nodes must receive an upper bound on the radius of the smallest enclosing ball of the initial values, and the parameter $\varepsilon$. From this, each node can compute the number of rounds needed until it can terminate. Note that a round counter makes the algorithm no longer memoryless.

On the lower bound side, we present a constructive lower bound of $2/3$ on the contraction rate of any coordinate-free memoryless anonymous algorithm in non-split networks. We also present a better lower bound of $\sqrt{\frac{d}{2(d+1)}}$ on the contraction rate, which assumes that the number of nodes is infinite. For both bounds, our algorithm is not tight. Additionally, we show a lower bound on one-round contraction that can be achieved with a coordinate-free memoryless anonymous algorithm. With respect to this bound, our algorithm is asymptotically tight. Our results show the difficulty in improving the lower bound of $\sqrt{\frac{d}{2(d+1)}}$, as we struggle to construct a self-similar example out of the one-round lower bound. To make the bounds tight, an amortized analysis of the \bmp algorithm could be interesting to consider in the future.

\section*{Acknowledgments}

This work was supported by the German Research Foundation (DFG), Schwerpunktprogramm SPP 2378 - ReNO-2 (grant 511099228), 2025-2029.

The authors of this paper have used artificial intelligence (AI) tools running large language models (LLM) to discover the $2/3$ lower bound from Section~\ref{sec:LBEightDim}, as well as the one-round lower bound of $\sqrt{2/3}$ in Section~\ref{sec:LBOneRound}. For the $2/3$ lower bound in Section~\ref{sec:LBEightDim}, we started with the $\sqrt{1/3}$ lower bound for $d=2$ that we developed ourselves and used an LLM to generalize it to higher dimensions. The LLM came up with constructions only for $d=4$ and $d=8$. It returned that the same construction is harder to generalize because of the Hurwitz theorem. The one-round lower bound of $\sqrt{2/3}$ in Section~\ref{sec:LBOneRound} was a generalization of our  
tightness result from Section~\ref{sec:algAnalysisTightness}. In particular, we asked the LLM to look for an analogous construction that has the necessary symmetry properties, such that any coordinate-free memoryless anonymous deterministic algorithm can only choose one point in every view as the input for the next round. 

Beyond the use of AI to develop the two theoretical results, we used AI to review the manuscript for errors and inconsistencies, consult it for fixing mathematical gaps, and to write the code for our tikz pictures.  The models we used were Kimi K3~\cite{KimiK3}, GLM 5.3~\cite{GLM53}, Claude Opus 5~\cite{ClaudeOpus5}, and Claude Opus 5.5~\cite{ClaudeOpus55}.

We would like to emphasize that the results in this paper (excluding the $2/3$ and the $\sqrt{2/3}$ lower bound) have been developed by the authors. Additionally, the text in this paper has been written entirely by the authors, and the authors have carefully verified the full content of this paper.

\printbibliography

\end{document}